\documentclass[a4paper,UKenglish,cleveref, autoref]{lipics-v2019}


\bibliographystyle{plainurl}

\title{Longest Common Subsequence on Weighted Sequences}

\titlerunning{Longest Common Subsequence on Weighted Sequences}

\author{Evangelos Kipouridis}{Basic Algorithms Research Copenhagen (BARC), University of Copenhagen, Denmark}{kipouridis@di.ku.dk}{https://orcid.org/0000-0002-5830-5830}{Thorup’s Investigator Grant 16582, Basic Algorithms Research Copenhagen (BARC), from the VILLUM Foundation,
and European Union’s Horizon 2020 research and innovation program under the Marie Skłodowska-Curie grant agreement No 801199. \includegraphics[height=0.03\textwidth, width=0.11\textwidth]{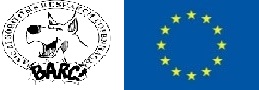}}

\author{Kostas Tsichlas}{School of Informatics, Aristotle University of Thessaloniki, Greece}{tsichlas@csd.auth.gr}{}{}

\authorrunning{E. Kipouridis and K. Tsichlas}

\Copyright{Evangelos Kipouridis and Kostas Tsichlas}

\ccsdesc[100]{Theory of computation~Approximation algorithms analysis}
\ccsdesc[100]{Theory of computation~W hierarchy}
\ccsdesc[100]{Theory of computation~Problems, reductions and completeness}

\keywords{WLCS, LCS, weighted sequences, approximation algorithms, lower bound}

\category{}

\nolinenumbers 

\acknowledgements{We would like to thank the anonymous reviewers for their careful reading of our paper and their many insightful comments and suggestions.}

\hideLIPIcs  

\EventEditors{Inge Li G{\o}rtz and Oren Weimann}
\EventNoEds{2}
\EventLongTitle{31th Annual Symposium on Combinatorial Pattern Matching (CPM 2020)}
\EventShortTitle{CPM 2020}
\EventAcronym{CPM}
\EventYear{2020}
\EventDate{June 17--19, 2020}
\EventLocation{Copenhagen, Denmark}
\EventLogo{}
\SeriesVolume{161}
\ArticleNo{21}

\begin{document}

\maketitle

\begin{abstract}
We consider the general problem of the Longest Common Subsequence ($LCS$) on weighted sequences. Weighted sequences are an extension of classical strings, where in each position every letter of the alphabet may occur with some probability. Previous results presented a $PTAS$ and noticed that no $FPTAS$ is possible unless $P=NP$. In this paper we essentially close the gap between upper and lower bounds by improving both. First of all, we provide an $EPTAS$ for bounded alphabets (which is the most natural case), and prove that there does not exist any $EPTAS$ for unbounded alphabets unless $FPT=W[1]$. Furthermore, under the Exponential Time Hypothesis, we provide a lower bound which shows that no significantly better $PTAS$ can exist for unbounded alphabets. As a side note, we prove that it is sufficient to work with only one threshold in the general variant of the problem.
\end{abstract}

\section{Introduction} \label{sec:intro}

\subsection{General concepts}
We consider the problem of determining the $LCS$ (Longest Common Subsequence) on weighted sequences. Weighted sequences, also known as $p$-weighted sequences or Position Weighted Matrices (PWM) \cite{DBLP:journals/jda/AmirGS10,journals/nar/ThompsonHG94} are probabilistic sequences which extend the notion of strings, in the sense that in each position there is some probability for each letter of an alphabet $\Sigma$ to occur there.

Weighted sequences were introduced as a tool for motif discovery and local alignment and are extensively used in molecular biology \cite{DBLP:books/cu/Gusfield1997}. They have been studied both in the context of short sequences (binding sites, sequences resulting from multiple alignment, etc.) and on large sequences, such as complete chromosome sequences that have been obtained using a whole-genome shotgun strategy \cite{DBLP:conf/soda/Myers00,DBLP:conf/recomb/Venter02}. Weighted sequences are able to keep all the information produced by such strategies, while classical strings impose restrictions that oversimplify the original data.

Basic concepts concerning the combinatorics of weighted sequences (like pattern matching, repeats discovery and cover computation) were studied using weighted suffix trees \cite{DBLP:conf/ifipTCS/IliopoulosMPPTT04}, Crochemore's partitioning \cite{DBLP:journals/almob/BartonIP14a,DBLP:journals/algorithmica/BartonP18,DBLP:journals/jcb/ChristodoulakisIMPTT06}, the Karp-Miller-Rabin algorithm \cite{DBLP:journals/jcb/ChristodoulakisIMPTT06}, and other approaches \cite{DBLP:journals/bmcbi/ZhangGI13, DBLP:journals/mst/KociumakaPR19}. Other interesting results include approximate and gapped pattern matching \cite{DBLP:conf/cpm/AmirIKP06,DBLP:conf/ciac/ZhangGI10, DBLP:journals/iandc/RadoszewskiS20}, online pattern matching \cite{DBLP:journals/iandc/CharalampopoulosIPR19}, weighted indexing \cite{DBLP:journals/tcs/AmirCIKZ08, DBLP:journals/iandc/BartonKLPR20}, swapped matching \cite{DBLP:conf/cis/ZhangGI04}, the all-covers and all-seeds problem \cite{journals/papl/ZhangGFI10,DBLP:conf/aaim/ZhangGI10}, extracting motifs \cite{DBLP:conf/spire/IliopoulosPTTT04}, and the weighted shortest common supersequence problem \cite{DBLP:conf/spire/AmirGS11, DBLP:conf/spire/CharalampopoulosKPRRSWZ19}. There are also some more practical results on mapping short weighted sequences to a reference genome \cite{DBLP:journals/ijcbdd/AntoniouIMP09} (also studied in the parallel setting \cite{DBLP:journals/ijfcs/IliopoulosMP12}), as well as on the reporting version of the problem which we also consider in this paper \cite{DBLP:journals/algorithmica/BartonP18}.

The Longest Common Subsequence ($LCS$) problem is a well-known measure of similarity between two strings. Given two strings, the output should be the length of the longest subsequence common to both strings. Dynamic programming solutions \cite{DBLP:journals/cacm/Hirschberg75,DBLP:journals/jacm/WagnerF74} for this problem are classical textbook algorithms in Computer Science. $LCS$ has been applied in computational biology for measuring the commonality of DNA molecules or proteins which may yield similar functionality. A very interesting survey on algorithms for the $LCS$ can be found in \cite{DBLP:conf/spire/BergrothHR00}. The current $LCS$ algorithms are considered optimal, since matching lower bounds (under the Strong Exponential Time Hypothesis) were proven \cite{DBLP:conf/focs/AbboudBW15,DBLP:conf/soda/BringmannK18}.

Extensions of this problem on more general structures have also been investigated (trees and matrices \cite{DBLP:journals/tcs/AmirHKST08}, run-length encoded strings \cite{DBLP:journals/jc/ApostolicoLS99}, and more). One interesting variant of the $LCS$ is the Heaviest Common Subsequence ($HCS$) where the matching between different letters is assigned a different weight, and the goal is to maximize the weight of the common subsequence, rather than its length.

\subsection{Weighted LCS}
The problem studied in this paper is the weighted $LCS$ (WLCS) problem. It was introduced by Amir et al. \cite{DBLP:journals/jda/AmirGS10} as an extension of the classical $LCS$ problem on weighted sequences. Given two weighted sequences, the goal is to find a longest string which has a high probability of appearing in both sequences. Amir et al. initially solved an easier version of this problem in polynomial time, but unfortunately its applications are limited. As far as the general problem is concerned, they hinted its NP-Hardness by giving an NP-Hardness result on a closely related problem, which they call the log-probability version of WLCS. In short, the problem is the same, but all products in its definition are replaced with sums. Their proof is based on a Turing reduction and only works for unbounded alphabets. Finally, Amir et al. provide an $\frac{1}{|\Sigma|}$-approximation algorithm for the WLCS problem.

Cygan et al. \cite{DBLP:journals/dam/CyganKRRW16} strengthened the evidence that WLCS is NP-Hard by providing an NP-Completeness result on the decision log-probability version of WLCS (informally introduced in the previous paragraph), already for alphabets of size $2$, using a Karp reduction; for alphabets of size $1$ the solution is trivial since there is no uncertainty. They also gave an $\frac{1}{2}$-approximation algorithm and a $PTAS$, while also noticing that an $FPTAS$ cannot exist, assuming WLCS is indeed NP-Hard, as hinted by their evidence, and that P $\neq$ NP. Finally, they proved that every instance of the problem can be reduced to a more restricted class of instances. However, for this to be achieved their algorithm needs to perform exact computations of roots and logarithms that may make the algorithm to err.

Finally, it is worth noting that Charalampopoulos et al. \cite{DBLP:conf/spire/CharalampopoulosKPRRSWZ19}, proved that unless P=NP, WLCS cannot be solved in $\mathcal{O}(n^{f(a)})$ time, for any function $f(a)$, where $a$ is the cut-off probability. We note that this result concerns exact computations rather than approximations.

\subsection{Our results}
In this paper we essentially close the gap between upper and lower bounds for WLCS by improving both; we prove that the problem is indeed NP-Hard even for alphabets of size $2$. Furthermore, we provide an $EPTAS$ for bounded alphabets. These two results, along with the $FPTAS$ observation by Cygan et al. completely characterize the complexity of WLCS for bounded alphabets.
For unbounded alphabets, a $PTAS$ was already known by Cygan et al. \cite{DBLP:journals/dam/CyganKRRW16}. We show matching lower bounds, both by ruling out the possibility of an $EPTAS$, and by showing that, under the Exponential Time Hypothesis, no significantly better $PTAS$ can exist.
We also prove that every instance of WLCS can be reduced to a restricted class of instances without using roots and logarithms, thus being able to actually achieve exact computations without rounding errors that can make the algorithm err.

As noted in the previous paragraph, apart from essentially closing the gap between hardness results and faster algorithms we also circumvent the need to work with roots and logarithms as the previous results did. In short, by taking advantage of the property that $(ab)^c=a^cb^c$ and setting $c$ to be an appropriate logarithm, previous results transformed any instance to a more manageable form. However, this transformation introduces an error that may make the algorithm err as noted in Appendix~\ref{app_sec:one_threshold}.
Table~\ref{tab:Results} summarizes the above discussion. Table~\ref{tab:AlphabetSize} summarizes our results depending on the alphabet-size.

A short discussion is in order with respect to what new insights on weighted $LCS$ enabled us to achieve progress. Our most crucial observation is the fact that the problem behaves differently in the natural case of a bounded alphabet, and in the case of an unbounded alphabet. Without this distinction, closing the gap between upper and lower bounds was unlikely. That's because, on the one hand, no $EPTAS$ for the general case could be found, as none existed. On the other hand, proving that no $EPTAS$ exists requires reductions that work only on unbounded alphabets. The aforementioned distinction is what enabled us to understand that modifying the existing reductions, which work for alphabets of size $2$, would be futile in proving $W[1]$-Hardness.

Furthermore, it was crucial to identify that working with products is the core difficulty in proving NP-Hardness of weighted $LCS$. The introduction of the log-probability version of the weighted $LCS$ reflects the assumption that the difference between working with sums and working with products is just a technicality. After \cite{DBLP:journals/jda/AmirGS10} and \cite{DBLP:journals/dam/CyganKRRW16} successfully proved NP-Hardness for the log-probability version, it was natural to attempt reducing from it for proving NP-Hardness of the weighted $LCS$ problem. Despite the apparent similarities between the two problems, their difference did not allow us to craft such a reduction. For the same reason, Cygan et al. used a model that assumed infinite precision computations with reals, while we are able to avoid such a strong assumption.

\subsection{Organization of the paper}
The rest of the paper is organized as follows. In Section~\ref{sec:preliminaries}, we provide necessary definitions and discuss the model of computation. In Section~\ref{sec:NP}, we show that WLCS is NP-Complete while in Section~\ref{sec:eptas}, we provide the $EPTAS$ algorithm for bounded alphabets, which is also an improved $PTAS$ for unbounded alphabets. In Section~\ref{sec:noeptas}, we show that there can be no $EPTAS$ for unbounded alphabets by showing that this problem is $W[1]$-hard and in Section~\ref{sec:lowerbound}, we describe the matching conditional lower bound. We conclude in Section~\ref{sec:conclusion}.

For clarity purposes, some proofs and technical discussions are moved to the Appendix. More specifically, in Appendix~\ref{app_sec:one_threshold} we present an algorithm that transforms any instance of our problem to an equivalent, but easier to handle, instance. We also show that the rounding errors introduced by working with reals (logarithms and roots) may cause a similar algorithm by Cygan et al. \cite{DBLP:journals/dam/CyganKRRW16} to err if standard rounding is used.

\begin{table}[ht]
\caption{Results on WLCS.}
\noindent\makebox[\textwidth]{
\label{tab:Results}
\begin{tabular}{|l|l|l|l|}
\hline
                                                                                             & Amir et al.                                                      & Cygan et al.                                                                             & Our results                                                                                        \\ \hline
NP-Hardness of WLCS                                                                           & \begin{tabular}[c]{@{}l@{}}Hinted, by NP-Hardness of\\ Log-probability version\\ (Turing reduction -\\ only for unbounded alphabets)\end{tabular} & \begin{tabular}[c]{@{}l@{}}Hinted, by NP-Hardness of\\ Log-probability version\\ (Karp reduction - \\ already from alphabets size $2$)\end{tabular} &  \begin{tabular}[c]{@{}l@{}}Proved \\ (Karp Reduction - \\ already from alphabets of size $2$) \end{tabular} \\ \hline
Approximation Algorithms                                                                     & $\frac{1}{\Sigma}$-Approximation & $PTAS$                                                                                     & \begin{tabular}[c]{@{}l@{}}$EPTAS$ for bounded\\ alphabets,\\ Improved $PTAS$ for\\ unbounded\end{tabular} \\ \hline
\begin{tabular}[c]{@{}l@{}}Proof that no $EPTAS$ exists\\ for unbounded alphabets\end{tabular} & No                                                               & No                                                                                       & Yes                                                                                                \\ \hline
Lower bound on any $PTAS$                                                                          & No                                                               & No                                                                                       & \begin{tabular}[c]{@{}l@{}}Matching the\\upper bound,\\under $ETH$\end{tabular} \\ \hline
\begin{tabular}[c]{@{}l@{}}Reduction to a restricted\\ class of instances\end{tabular}       & No                                                               & \begin{tabular}[c]{@{}l@{}}Yes, by assuming exact\\ computations of logarithms \end{tabular} & \begin{tabular}[c]{@{}l@{}}Yes, without any \\ extra assumptions \end{tabular} \\ \hline
\end{tabular}
}
\end{table}

\begin{table}[ht]
\caption{Results depending on the Alphabet Size}
\noindent\makebox[\textwidth]{
\label{tab:AlphabetSize}
\begin{tabular}{|l|l|l|ll}
\cline{1-3}
Alphabet Size & Previous Results                                                           & Our results                                                                                        &  &  \\ \cline{1-3}
$1$             & Trivial                                                                    & Trivial                                                                                            &  &  \\ \cline{1-3}
Constant Size      & No $FPTAS$ possible & Achieved $EPTAS$                                                                                     &  &  \\ \cline{1-3}
Depending on the input          & Achieved $PTAS$                                                              & \begin{tabular}[c]{@{}l@{}}No $EPTAS$ possible,\\ Improved $PTAS$, \\ Matching Lower Bound\end{tabular} &  &  \\ \cline{1-3}
\end{tabular}
}
\end{table}

\section{Preliminaries} \label{sec:preliminaries}
\subsection{Basic Definitions}
Let $\Sigma = \{\sigma _1, \sigma _2, \ldots, \sigma _K\}$ be a finite alphabet. We deal both with bounded $(K=O(1))$ and unbounded alphabets. $\Sigma ^d$ denotes the set of all words of length $d$ over $\Sigma$. $\Sigma ^*$ denotes the set of all words over $\Sigma$.
\begin{definition}[Weighted Sequence]
A weighted sequence $X$ is a sequence of functions $p^{(X)}_1,\ldots,p^{(X)}_{|X|}$, where each function assigns a probability to each letter from $\Sigma$. We thus have $\sum_{j=1}^K{p^{(X)}_i(\sigma _j)}=1$ for all $i$, and $p^{(X)}_i(\sigma _j)\geq 0$ for all $i,j$.
\end{definition}

By $WS(\Sigma)$ we denote the set of all weighted sequences over $\Sigma$. Let $X\in WS(\Sigma)$. Let $Seq^{|X|}_d$ be the set of all increasing sequences of $d$ positions in $X$. For a string $s\in \Sigma^d$ and $\pi \in Seq^{|X|}_d$, define $P_X(\pi , s)$ as the
probability that the subsequence on positions corresponding to $\pi$ in $X$ equals $s$.
More formally, if $\pi =(i_1, i_2, \ldots, i_d)$ and $s_k$ denotes the $k$-th letter of $s$, then
\[P_X(\pi , s)= \prod_{k=1}^d{p^{(X)}_{i_k}(s_k)}\]

Denote \[SUBS(X,a)=\{s\in \Sigma ^* \vert \exists \pi \in Seq^{|X|}_{|s|} ~such~that~P_X(\pi , s)\geq a\}\]
That is, $SUBS(X,a)$ is the set of deterministic strings which match a subsequence of $X$ with probability at least $a$. Every $s\in SUBS(X,a)$ is called an $a$-subsequence of $X$.

Let us give a clarifying example. If $\Sigma=\{\sigma_1,\sigma_2\}$ and $X$ is a long weighted sequence, where in each position the probability for each letter to appear is $0.5$, then $SUBS(X,0.3)$ does not contain $s=\sigma_1\sigma_1$, as, for any increasing subsequence of $2$ positions, the probability of $s$ appearing is $0.25<0.3$.

The decision problem we consider is the following:
\begin{definition}[$(a_1,a_2)$-WLCS decision problem]
Given two weighted sequences $X, Y$, two cut-off probabilities $a_1, a_2$ and a number $k$, find if the longest string $s$ contained in $SUBS(X,a_1)\cap SUBS(Y,a_2)$ has length at least $k$.
\end{definition}

Naturally, the respective optimization problem is the following:
\begin{definition}[$(a_1,a_2)$-WLCS optimization problem]
Given two weighted sequences $X, Y$, and two cut-off probabilities $a_1, a_2$, find the length of the longest string contained in $SUBS(X,a_1)\cap SUBS(Y,a_2)$.
\end{definition}

Both in the decision and the optimization version, the WLCS problem is the $(a_1,a_2)$-WLCS problem, where $a_1=a_2$. We denote these (equal) probabilities by $a$ ($a=a_1=a_2$) for concreteness. 

Let us note that the problem is only interesting if $|\Sigma|\geq 2$. For $|\Sigma|=1$ the problem is trivial since there is no uncertainty at all. The same letter appears in every position in both strings with probability $1$, and thus the answer is simply the length of the shorter weighted sequence.

Finally, let us also state that the Log-Probability version of the WLCS, studied in previous papers, is the same as the original WLCS if we replace $P_X(\pi , s)= \prod_{k=1}^d{p^{(X)}_{i_k}(s_k)}$ by $P_X(\pi , s)= \sum_{k=1}^d{p^{(X)}_{i_k}(s_k)}$.

\subsection{Model of Computation}
Our model of computation is the standard word $RAM$, introduced by Fredman and Willard \cite{DBLP:conf/stoc/FredmanW90} to simulate programming languages like C. The word size is $w=\Omega(\log{I})$, where $I$ is the input size in bits, so as to allow random access indexing of the whole input. Thus, arithmetic operations between words take constant time.
However, due to the nature of our problem, it is necessary to compute products of many numbers. This can produce numbers that are much larger than the word size. We even allow numbers in the input to be larger than $2^w$ (these numbers just need to use more than one word to be represented). We generally assume that each number in the input is represented by at most $B$ bits, but we do not pose any constraint on $B$ other than the trivial one that $B<I$.
Of course, in cases where we deal with numbers that occupy many words, we no longer have unit-cost arithmetic operations; we guarantee, however, that our results only use linear or near-linear time operations (like comparisons and multiplications) on numbers polynomial in the input size. Thus, although we do not enjoy the unit-cost assumption for arbitrary numbers, we always stay within the polynomial-time regime.

\subsection{Basic Operations}
In this subsection we discuss the multiplication of two $B$-bit input numbers in (polynomial) $Mul_w(B)$ time, where $w$ is the word-size. For example, for integers there exists a multiplication algorithm by Harvey and van der Hoeven \cite{harvey:hal-02070778} with time complexity $Mul_w(B)=\mathcal{O}\left(B\log{B}\right)$ (generally the running time can also depend on $w$, although in this case it does not). Let us notice that although the result is unpublished yet, we use it due to its easy to read time complexity; it is trivial to use other algorithms instead, like the one from F\"urer \cite{DBLP:journals/siamcomp/Furer09}, or the more practical one by Sch\"onhage and Strassen \cite{DBLP:journals/computing/SchonhageS71}.
We establish the complexity of multiplying $x$ $B$-bit numbers. Our divide and conquer algorithm splits the numbers into two (equal sized) groups, recursively multiplies each, and multiplies the results in $Mul_w\left(\frac{xB}{2}\right)$ time. By a direct application of the Master Theorem by Bentley et al. \cite{BentleyHJ} we prove the following lemma.

\begin{lemma} \label{lem:MultiplyX_Bbits}
~Multiplying $x$ $B$-bit numbers costs 
\begin{itemize}
\item $\mathcal{O}(Mul_w(xB)\log(xB))$ time if $Mul_w(xB) = \Theta(xB\log^k(xB))$ for some constant $k$, 
\item $\mathcal{O}((xB)^{c})$ else if $Mul_w(xB) = \mathcal{O}((xB)^c)$ for some constant $c\geq 1$,
\end{itemize}
assuming that $Mul_w(N)$ is a polynomial time algorithm that multiplies two $N$-bit numbers.
\end{lemma}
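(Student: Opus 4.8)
The plan is to set up and solve the recurrence governing the divide-and-conquer procedure. Let $T(x)$ be the cost of multiplying $x$ numbers of $B$ bits each. The algorithm splits the numbers into two groups of $x/2$, multiplies each group recursively, and finally multiplies the two resulting products. The key observation for the combine step is a bound on the bit-length of a partial product: the product of $x/2$ numbers, each strictly below $2^B$, is strictly below $2^{xB/2}$ and hence occupies at most $xB/2$ bits. Consequently the final multiplication costs $Mul_w(xB/2)$, and, assuming for simplicity that $x$ is a power of two (the general case following by padding the list with copies of the number $1$), we obtain
\[T(x) = 2\,T(x/2) + Mul_w\!\left(\frac{xB}{2}\right).\]

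Next I would solve this recurrence via the Master Theorem of Bentley et al., with $a=b=2$, so that the watershed exponent is $x^{\log_b a}=x$ and the driving function is $f(x)=Mul_w(xB/2)$. Concretely, it is cleanest to expand the recursion tree: at depth $i$ there are $2^i$ subproblems of size $x/2^i$, each contributing a combine cost $Mul_w(xB/2^{i+1})$, while the leaves (single numbers) contribute nothing, giving
\[T(x) = \sum_{i=0}^{\log x - 1} 2^i\, Mul_w\!\left(\frac{xB}{2^{i+1}}\right).\]

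For the first case I would substitute $Mul_w(N)=\Theta(N\log^k N)$; each level then costs $\Theta\bigl(\frac{xB}{2}\log^k(xB/2^{i+1})\bigr)$, which is $\mathcal{O}(xB\log^k(xB))$ since $\log(xB/2^{i+1})\le\log(xB)$. Summing over the $\log x\le\log(xB)$ levels yields $\mathcal{O}(xB\log^{k+1}(xB))=\mathcal{O}(Mul_w(xB)\log(xB))$; this is exactly the critical case of the Master Theorem, in which all levels contribute roughly equally and the running time picks up one extra logarithmic factor. For the second case I would substitute $Mul_w(N)=\mathcal{O}(N^c)$ with $c>1$; each level then costs $\mathcal{O}\bigl((xB)^c\,2^{i(1-c)-c}\bigr)$, a geometric series in $i$ with ratio $2^{1-c}<1$, so the sum is dominated by its top term and $T(x)=\mathcal{O}((xB)^c)$. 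This is the root-dominated case, where the single top-level multiplication asymptotically absorbs all recursive work.

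The one genuinely delicate point, and the step I expect to require the most care, is the combine-step bit bound: one must verify that a partial product of $x/2$ numbers really fits in $xB/2$ bits, so that the recurrence correctly uses $Mul_w(xB/2)$ rather than a larger argument. The exponential blow-up of products is precisely what invalidates the naive unit-cost view of each multiplication and forces this bookkeeping. Everything else---handling floors and ceilings when $x$ is not a power of two, and checking the monotonicity and regularity hypotheses under which the polynomial-time function $Mul_w$ admits the Master Theorem---is routine and does not affect the asymptotics.
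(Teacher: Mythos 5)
Your proposal is correct and takes essentially the same approach as the paper: the identical split-in-half divide-and-conquer recurrence (the paper writes it as $T(N)=2T(N/2)+Mul_w(N)$ with $N=xB$), resolved by the same two cases of the Master Theorem of Bentley et al., with the critical case $Mul_w(N)=\Theta(N\log^k N)$ contributing an extra $\log$ factor and the case $c>1$ being root-dominated. The only differences are cosmetic: you unroll the recursion tree explicitly instead of citing the Master Theorem as a black box, and you make explicit the $xB/2$-bit bound on partial products that the paper leaves implicit.
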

\begin{proof}
The algorithm simply splits the numbers in two equal-sized groups, recursively multiplies each, and then multiplies the results. Let $N=xB$. We have that the running time for multiplying $x$ $B$-bit numbers is $T(N)=2T(\frac{N}{2}) + Mul_w(N)$. Since $c_{crit}=\log_2{2}=1$, and $Mul_w(N)=\Omega(N)$, the Master Theorem \cite{BentleyHJ} gives two cases. Either $Mul_w(N)=\Theta(N\log^k(N))$ for some constant $k$, in which case $T(N) = \mathcal{O}(Mul_w(N)\log{N})$, or else $Mul_w(N) = \mathcal{O}(N^c)$ for some constant $c\geq 1$ (such a constant exists since we assume polynomial time multiplications). In this case, since it holds that $2Mul_w(\frac{N}{2}) \leq 2Mul_w(N)$, we get that $T(N)=Mul_w(N)$ if $c>c_{crit}=1$. Notice that we handled all cases, since $Mul_w(N)=N$ is handled by the first case with $k=0$, and whatever does not fit in the first case, definitely fits in the second, since we assumed that $Mul_w(N)$ is polynomial in $N$.
\end{proof}

\begin{corollary} \label{Multiplication}
Multiplying $x$ $B$-bit numbers costs polynomial time by using any polynomial time algorithm for multiplying two $B$-bit numbers as a black box. Especially if we use Harvey and Van Der Hoeven's algorithm, the time cost is $\mathcal{O}\left(xB\log^2{(xB)}\right)$.
\end{corollary}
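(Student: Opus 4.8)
The plan is to derive both parts of the corollary directly from Lemma~\ref{lem:MultiplyX_Bbits}, which already resolves the recurrence $T(N)=2T(N/2)+Mul_w(N)$ with $N=xB$ and thus carries the entire analytic burden.

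For the first claim, I would reason as follows. We are given a black-box routine whose running time $Mul_w(N)$ on two $N$-bit numbers is polynomial in $N$; in particular $Mul_w(N)=\mathcal{O}(N^c)$ for some constant $c\geq 1$. This is precisely the hypothesis of Lemma~\ref{lem:MultiplyX_Bbits}, and it places us (at least) in the lemma's second case. The lemma then bounds the cost of multiplying $x$ $B$-bit numbers by $\mathcal{O}((xB)^c)$, which is polynomial in the input size. Hence a product of arbitrarily many $B$-bit inputs can be formed in polynomial time using only the two-number routine as a black box.

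For the second claim, I would instantiate $Mul_w$ with the algorithm of Harvey and van der Hoeven, for which $Mul_w(N)=\mathcal{O}(N\log N)$, so that $Mul_w(xB)=\mathcal{O}(xB\log(xB))$. This fits the first case of Lemma~\ref{lem:MultiplyX_Bbits} with $k=1$, i.e.\ $Mul_w(N)=\Theta(N\log^k N)$. The lemma accordingly yields running time $\mathcal{O}(Mul_w(xB)\log(xB))$; substituting the bound for $Mul_w(xB)$ and collecting the two logarithmic factors gives $\mathcal{O}(xB\log(xB)\cdot\log(xB))=\mathcal{O}(xB\log^2(xB))$, as claimed.

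This is really a verification rather than a fresh proof, so the only step requiring attention is checking that Harvey and van der Hoeven's bound lands in the first rather than the second branch of the lemma—it does, with $k=1$—and that the two $\log(xB)$ factors, one inherent in $Mul_w(xB)$ and one introduced by the divide-and-conquer recursion, combine into a single $\log^2(xB)$ term. I expect no genuine obstacle beyond this bookkeeping.
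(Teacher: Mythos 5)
Your proposal is correct and matches the paper's (implicit) reasoning exactly: the corollary is stated without a separate proof precisely because it follows by instantiating Lemma~\ref{lem:MultiplyX_Bbits} with $N=xB$, using the polynomial bound $Mul_w(N)=\mathcal{O}(N^c)$ for the general claim and the Harvey--van der Hoeven bound $Mul_w(N)=\mathcal{O}(N\log N)$ (first case, $k=1$) to collect the two logarithmic factors into $\mathcal{O}\left(xB\log^2(xB)\right)$. Your only looseness---saying the polynomial hypothesis places you ``at least'' in the second case when the lemma's branches are exclusive---is harmless, since the first branch also yields a polynomial bound, as the lemma's own proof notes.
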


Let us also notice that the way to divide two $B$-bit numbers is simply storing both the numerator and the denominator. Comparing two numbers $x_1=\frac{num_1}{den_1}$ and $x_2=\frac{num_2}{den_2}$ can be done by comparing $num_1\times den_2$ and $num_2\times den_1$. The only other operation we need when working with such fractions is subtracting a $B$-bit number $x=\frac{num}{den}$ from $1$. This is simply $\frac{den-num}{den}$.

\section{NP-Completeness} \label{sec:NP}

An NP-Completeness proof for the integer log-probability version of the WLCS problem has been given in~\cite{DBLP:journals/dam/CyganKRRW16}. This is a closely related problem, with the main difference being that products are replaced with sums. We do not know of any way to reduce from this log-probability version to WLCS other than exponentiating. As stated in the explanation of our model of computation in Section~\ref{sec:preliminaries}, there is no limit on the number of bits needed to represent a single number (it just occupies a lot of words). This means that, if the input consisted of $I$ bits, and there was a number (probability) represented with $\frac{I}{100}$ bits, exponentiating would result in a number with $2^{\frac{I}{100}}$ bits, meaning the reduction would not be a polynomial-time one. For this reason, we believe that although it is easier to prove NP-Completeness for the integer log-probability version of the problem, there is no easy way to use it for proving NP-Completeness for the general version. We, thus, give a reduction from the NP-Complete problem Subset Product \cite{DBLP:books/fm/GareyJ79} which proves NP-Completeness directly for the general problem.

Notice that for alphabets consisting of one letter, the problem is trivial since there is no uncertainty at all. In the following, we prove that even for alphabets consisting of two letters, the problem is NP-Complete.

\begin{definition}[Subset Product]
Given a set $L$ of $n$ integers and an integer $P$, find if there exists a subset of the numbers in $L$ with product $P$.
\end{definition}

\begin{lemma}
WLCS is NP-Complete, even for alphabets of size $2$.
\end{lemma}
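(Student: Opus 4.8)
The plan is to give a polynomial-time Karp reduction from Subset Product directly to the WLCS decision problem, working with products of probabilities throughout so that no exponentiation or logarithm is ever needed. Given an instance of Subset Product consisting of integers $L=\{\ell_1,\ldots,\ell_n\}$ and a target $P$, I would build two weighted sequences $X,Y$ over the alphabet $\Sigma=\{\sigma_1,\sigma_2\}$, a single cut-off probability $a$, and a length threshold $k$, such that a common $a$-subsequence of length $k$ exists if and only if some sub-collection of the $\ell_i$ multiplies to exactly $P$. Membership in NP is routine: a witness is the string $s$ together with the two increasing position sequences $\pi^{(X)},\pi^{(Y)}$, and by Corollary~\ref{Multiplication} the two products $P_X(\pi^{(X)},s)$ and $P_Y(\pi^{(Y)},s)$ can be computed and compared against $a$ in polynomial time even when the probabilities occupy many words; so the bulk of the work is the hardness direction.

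The encoding I would aim for dedicates one block of positions of $X$ (and correspondingly of $Y$) to each integer $\ell_i$, where at each position the two letters $\sigma_1,\sigma_2$ occur with carefully chosen rational probabilities. The intent is that ``selecting'' $\ell_i$ into the subset corresponds to matching a particular letter in that block, contributing a factor tied to $\ell_i$ to the overall product, while ``not selecting'' it contributes a neutral factor. To make the probabilities clean rationals I would normalise against a common quantity such as $\prod_i \ell_i$ (or the largest $\ell_i$), so that the per-position probabilities are of the form $\ell_i/M$ and $1-\ell_i/M$ for a suitable integer $M$; this keeps every number a $B$-bit rational and lets me use the fraction representation (numerator/denominator) described at the end of Section~\ref{sec:preliminaries}. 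I would then set the threshold $k$ so that a length-$k$ common subsequence is forced to pick exactly one letter per block, and tune $a$ so that the product of the chosen factors meets the cut-off \emph{exactly} when the selected factors multiply to the encoding of $P$ rather than something smaller.

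The main obstacle, and the crux of the reduction, is that the WLCS cut-off is a one-sided inequality $P_X(\pi,s)\ge a$, whereas Subset Product demands an \emph{exact} product $P$; a naive encoding only detects subsets whose product is at least (or at most) $P$. The trick is to break the symmetry between $X$ and $Y$: since WLCS intersects $SUBS(X,a)$ with $SUBS(Y,a)$, I can arrange the two sequences so that one enforces ``product $\ge P$'' and the other enforces ``product $\le P$'', so that only the exact value $P$ survives in the intersection. Concretely, in $X$ a block would reward choosing large factors and in $Y$ the complementary encoding would reward choosing small factors, with the shared cut-off $a$ calibrated so that both inequalities are simultaneously tight precisely at $P$. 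Verifying that these two inequalities pin down exactly the subsets with product $P$ (and that no spurious matching across block boundaries can cheat the threshold) is where the detailed probability choices must be checked, but once the two-sided ``sandwich'' is set up correctly, the equivalence between the Subset Product instance and the WLCS instance follows, completing the reduction and hence the NP-completeness proof for $|\Sigma|=2$.
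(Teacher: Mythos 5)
Your overall architecture is the paper's: membership in NP via the certificate $(s,\pi^{(X)},\pi^{(Y)})$ checked with Corollary~\ref{Multiplication}, a direct Karp reduction from Subset Product over a two-letter alphabet, a length threshold equal to the full sequence length forcing one letter per position (which, as a bonus, makes the position sets in $X$ and $Y$ identical and disposes of your cross-block worry), and above all the two-sided sandwich in which $X$ enforces $\prod_{i\in S}L_i\ge P$ while $Y$ enforces $\prod_{i\in S}L_i\le P$. That sandwich is indeed the crux, and you have it. However, the one concrete instantiation you float fails, and the repair is precisely the technical content of the paper's proof. With $p^{(X)}_i(\sigma_1)=\ell_i/M$ and $p^{(X)}_i(\sigma_2)=1-\ell_i/M$, the full-length string that selects the subset $S$ has probability $M^{-n}\prod_{i\in S}\ell_i\prod_{i\notin S}(M-\ell_i)$; the unpicked factors $(M-\ell_i)$ do not cancel to a subset-independent constant, so two subsets with the same product $\prod_{i\in S}\ell_i$ receive different probabilities, and no cut-off $a$ can make $P_X(\pi,s)\ge a$ equivalent to $\prod_{i\in S}\ell_i\ge P$. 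What is needed is that the pick/not-pick \emph{ratio} at position $i$ be exactly $L_i$ (and exactly $1/L_i$ in $Y$): the paper sets $p^{(X)}_i(A)=\frac{L_i}{1+L_i}$, $p^{(X)}_i(B)=\frac{1}{1+L_i}$, so that $P_X=\bigl(\prod_i\frac{1}{1+L_i}\bigr)\prod_{i\in S}L_i$ is a subset-independent constant times the subset product, and symmetrically $P_Y=\bigl(\prod_i\frac{1}{1+L_i}\bigr)/\prod_{i\in S}L_i$. Your global normalizer $M$ (or $\prod_i\ell_i$) cannot achieve this factorization.

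A second, smaller gap is the single-threshold calibration. Even with the clean factorizations $P_X=c\prod_{i\in S}L_i$ and $P_Y=c'/\prod_{i\in S}L_i$, one cut-off $a$ must satisfy $a=cP$ and $a=c'/P$ simultaneously, i.e.\ $c'=cP^2$; this cannot be arranged by ``tuning $a$'' alone, since $c$ and $c'$ are already determined by the per-position probabilities. The paper fixes this with two extra forced positions: one where $X$ carries probability $1/P^2$ (and $Y$ probability $1$), and one where $Y$ carries probability $\prod_i 1/L_i$ (and $X$ probability $1$), after which both inequalities align against the single threshold $a=\bigl(\prod_i\frac{1}{1+L_i}\bigr)/P$. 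Your proposal, which dedicates positions only to the $\ell_i$ themselves, omits this gadget; with the ratio-$L_i$ probabilities and these two calibration positions added, your plan becomes the paper's proof.
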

\begin{proof}
Obviously $WLCS\in NP$ since the increasing subsequences $\pi_1, \pi_2$ and the string $s$ for which $P_X(\pi_1,s)\geq a, P_Y(\pi_2,s)\geq a$ are a certificate which, along with the input, can be used to verify in polynomial time that the problem has a solution.

Let $(L,P)$ be an instance of Subset Product and let $n=|L|$. By $L_i$ we denote the $i$-th number of the set $L$, assuming any fixed ordering of the $n$ numbers of $L$. We give a polynomial-time reduction to a $(X,Y,a,k)$ instance of WLCS, with alphabet size $2$ (we call the letters $'A'$ and $'B'$).

The core idea is the following: The weighted sequences have $n$ positions (plus $2$ more for technical reasons related to the threshold $a$). The number $k$ is equal to the length of the sequences, meaning that we pick every position, and the only question is whether we picked letter $'A'$ or letter $'B'$. Letter $'A'$ in position $i$ corresponds to picking the $i$-th number in the original Subset Product, while letter $'B'$ corresponds to not picking it. Finally, the letters $'A'$ picked in $X$ form an inequality of the form: "some product is $\geq P$", while the same letters in $Y$ form the inequality: "the same product is $\leq P$". For these two to hold simultaneously, it must be the case that we found some product equal to $P$, which is the goal of the original Subset Product.

More formally, the weighted sequences have size $n+2$. Let $c_i=\frac{1}{1+L_i}$ and $d_i=\frac{1}{1+\frac{1}{L_i}}$.
\begin{align*}
p^{(X)}_i('A')& =c_iL_i, 1\leq i\leq n &
p^{(Y)}_i('A')& =\frac{d_i}{L_i}, 1\leq i\leq n \\
p^{(X)}_{n+1}('A')& =1 &
p^{(Y)}_{n+1}('A')& = \prod_{j=1}^n{\frac{1}{L_i}} = \frac{\prod_{j=1}^n{c_i}}{\prod_{j=1}^n{d_i}} \\
p^{(X)}_{n+2}('A')& =\frac{1}{P^2} &
p^{(Y)}_{n+2}('A')& =1 \\
\end{align*}
where $p^{(X)}_i('B') = 1 - p^{(X)}_i('A')$ for all $i$, and similarly for $Y$. Notice that, in particular, $p^{(X)}_i('B') =c_i, 1\leq i\leq n$ and $p^{(Y)}_i('B') =d_i, 1\leq i\leq n$. Finally, we set $k=n+2$ and $a=\frac{\prod_{j=1}^n{c_i}}{P}$.

First of all, notice that since we must find a string of length $n+2$, we must choose a letter from every position. Thus, a choice of letter at some position on $X$ corresponds to the same choice of letter at that position on $Y$. The choice of letter on positions $n+1$ and $n+2$ is $'A'$ in both cases since \[p^{(X)}_{n+1}('B')=p^{(Y)}_{n+2}('B')=0\]
Suppose that the numbers at positions $\{i_1,\ldots,i_{\ell}\}$ give product $P$: \[\prod_{j=1}^{\ell}{L_{i_j}}=P\]
Then, we form the string $s$ by picking $'A'$ at positions $\{i_1,\ldots, i_{\ell},n+1,n+2\}$ and $'B'$ at all other positions. Thus
\[P_X(\{1,2,\ldots,n+2\},s)=\frac{\prod_{j=1}^{\ell}{L_{i_j}}\prod_{j=1}^n{c_i}}{P^2}=\frac{\prod_{j=1}^n{c_i}}{P}=a\]
\[P_Y(\{1,2,\ldots,n+2\},s)=\frac{\prod_{j=1}^n{d_i}\prod_{j=1}^n{c_i}}{\prod_{j=1}^{\ell}{L_{i_j}}\prod_{j=1}^n{d_i}}=\frac{\prod_{j=1}^n{c_i}}{P}=a\]

Conversely, suppose a solution for the WLCS problem, where the string $s$ is formed by picking $'A'$ at positions $\{i_1,\ldots, i_{\ell},n+1,n+2\}$ and $'B'$ at all other positions. It holds that:
\[P_X(\{1,2,\ldots,n+2\},s)=\frac{\prod_{j=1}^{\ell}{L_{i_j}}\prod_{j=1}^n{c_i}}{P^2}\geq a \implies \prod_{j=1}^{\ell}{L_{i_j}}\geq P\]
\[P_Y(\{1,2,\ldots,n+2\},s)=\frac{\prod_{j=1}^n{d_i}\prod_{j=1}^n{c_i}}{\prod_{j=1}^{\ell}{L_{i_j}}\prod_{j=1}^n{d_i}}\geq a\implies \prod_{j=1}^{\ell}{L_{i_j}}\leq P\]
The above imply that $\prod_{j=1}^{\ell}{L_{i_j}}=P$.
Finally, notice that all computations are done in polynomial time, due to Corollary~\ref{Multiplication}.
\end{proof}

\section{EPTAS for Bounded Alphabets, Improved PTAS for Unbounded Alphabets} \label{sec:eptas}

We now give an Efficient Polynomial Time Approximation Scheme ($EPTAS$) for the case where our alphabet size is bounded ($|\Sigma|=O(1)$). Let us notice that this is the case when working with DNA sequences ($|\Sigma|=4$), the most usual application of weighted sequences. The same algorithm is an improved (when compared to \cite{DBLP:journals/dam/CyganKRRW16}) $PTAS$ in the case of unbounded alphabets. This means that the WLCS problem is Fixed-Parameter Tractable for constant size alphabets and thus belongs to the corresponding complexity class $FPT$ as shown in Corollary~\ref{cor:FPT}.

The authors in~\cite{DBLP:journals/dam/CyganKRRW16} first noted that there is no $FPTAS$ unless $P=NP$, and so we can only hope for an $EPTAS$. Our result relies on their following result:

\begin{lemma}[Lemma 4.6 of \cite{DBLP:journals/dam/CyganKRRW16}] \label{PTAS}
It is possible to find, in polynomial time, a solution of size $d$ to the WLCS optimization problem such that the optimal value $OPT$ is guaranteed to be either $d$ or $d+1$ (however we do not know which one holds).
\end{lemma}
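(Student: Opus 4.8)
The plan is to pass to logarithmic (cost) space, solve a discretised version of the resulting budgeted-LCS problem exactly by dynamic programming, and then argue that the discretisation distorts the achievable \emph{length} by at most one. Write $n$ for the maximum length of the two input sequences and $A=-\log a$. For a letter $\sigma$ put $x^{(X)}_i(\sigma)=-\log p^{(X)}_i(\sigma)\ge 0$ and similarly $x^{(Y)}_j(\sigma)$; with this notation a string $s$ is a common $a$-subsequence exactly when it admits an increasing embedding in $X$ of total cost $\sum_k x^{(X)}_{i_k}(s_k)\le A$ and, simultaneously, an increasing embedding in $Y$ of total cost $\le A$. Thus WLCS becomes: find the longest string admitting embeddings of cost at most $A$ in \emph{both} sequences. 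The only obstruction to a direct LCS-style dynamic program is that the accumulated cost ranges over a continuum.

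To remove this obstruction I would quantise the costs. Fix the granularity $\lambda=A/n^2$, replace every cost by the integer $w=\lceil x/\lambda\rceil$ (so that $\lambda w\in[x,x+\lambda)$), and set the integer budget $B=n^2+n$. I would then run the dynamic program whose state is a quadruple (prefix of $X$ up to $i$, prefix of $Y$ up to $j$, rounded cost $c_X\le B$ already spent in $X$, rounded cost $c_Y\le B$ already spent in $Y$) and whose value is the maximum length of a common subsequence realisable within those budgets; the transitions are the three usual LCS moves (advance in $X$, advance in $Y$, or match positions $i,j$ by some letter $\sigma$, paying $w^{(X)}_i(\sigma)$ and $w^{(Y)}_j(\sigma)$). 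There are $O(n^2\cdot B^2)=O(n^6)$ states and $O(|\Sigma|)$ transitions each, and the rounded costs are $O(\log n)$-bit integers, so all arithmetic inside the table is on word-sized numbers. Computing these rounded costs needs only (approximate) logarithms of the input probabilities, which is unproblematic for an approximation scheme and consistent with the real-arithmetic model of \cite{DBLP:journals/dam/CyganKRRW16}; the only products of genuinely large numbers arise when verifying a returned solution, which is polynomial by Corollary~\ref{Multiplication}. Let $d$ be the length returned.

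It remains to sandwich $OPT$ between $d-1$ and $d$. For completeness, take an optimal common $a$-subsequence; it has $m=OPT\le n$ symbols, each rounded up by less than $\lambda$, so its rounded cost in either sequence is below $A/\lambda+m=n^2+m\le B$. Hence the optimum is feasible for the quantised program and $d\ge OPT$. For soundness, a quantised-feasible solution of length $d$ has true cost at most $\lambda B=A(1+1/n)$ in each sequence, i.e.\ it is a common $a^{1+1/n}$-subsequence of length $d$; I would then delete a single symbol to recover a genuine common $a$-subsequence of length $d-1$, giving $OPT\ge d-1$. Combining the two bounds yields $OPT\in\{d-1,d\}$, so the algorithm outputs the reconstructed solution of size $d'=d-1$ together with the guarantee $OPT\in\{d',d'+1\}$, as required.

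The crux — the step I expect to resist a one-line argument — is the deletion in the soundness direction. Removing a symbol must simultaneously pull the total cost back below $A$ in \emph{both} $X$ and $Y$, and because the two product constraints are coupled it is not a priori clear that one symbol repairs both; an isolated pair of costs $(x,y)$ that is large in $X$ but small in $Y$, together with its mirror image, already shows that naively dropping the costliest symbol of a fixed embedding can fail. The quantitative slack that makes the step go through is that the overshoot is only a factor $a^{1/n}$ per sequence while the subsequence has at most $n$ symbols, so some symbol carries at least a $1/n$ share of the overshoot; converting this into a single good deletion is cleanest after first reducing to the restricted, single-threshold instances (the reduction developed later in this paper and in \cite{DBLP:journals/dam/CyganKRRW16}), where membership in one of the two sequences is preserved under deletion and the argument collapses to one dimension.
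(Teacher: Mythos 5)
First, a framing remark: the paper contains no proof of this statement at all --- it is imported verbatim as Lemma 4.6 of \cite{DBLP:journals/dam/CyganKRRW16} --- so your attempt can only be judged against that source and on its own merits. Your skeleton (pass to cost space, quantised budgeted DP, completeness giving $d\geq OPT$, delete one symbol for soundness) is the right shape, but the soundness step has a genuine gap, which you correctly identify yourself and then fail to repair. Because you round the costs of \emph{both} sequences, the returned string may overshoot the budget $A$ in $X$ and in $Y$ simultaneously, and no single deletion need fix both: your own mirror example is already a proof of failure (take $d=n=2$, one symbol with cost pair $(1.5A,\,0)$ and one with $(0,\,1.5A)$; both totals equal $A(1+1/n)$, yet either deletion leaves the other sequence at cost $1.5A>A$). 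Two deletions only yield $OPT\in\{d-2,d-1,d\}$, strictly weaker than the lemma. The rescue you gesture at --- ``first reduce to single-threshold instances'' --- is a non sequitur: the reduction of Appendix~\ref{app_sec:one_threshold} merely equalises $a_1=a_2$; the resulting WLCS instance still carries one probability constraint \emph{per sequence}, so the repair problem remains two-dimensional and your counterexample applies unchanged.

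The missing idea is to round in one dimension only. Discretise only the $X$-costs, and let the DP value at state $(i,j,\ell,c_X)$ be the \emph{exact} maximal $Y$-probability of a common string of length $\ell$ embeddable in $X[1..i]$ within rounded budget $c_X$ and in $Y[1..j]$; these values are rationals with polynomially many bits, handled exactly via Corollary~\ref{Multiplication}. The returned length-$d$ string is then a genuine $a$-subsequence of $Y$ and only an $a^{1+1/n}$-subsequence of $X$, so the overshoot lives in a single coordinate: deleting the symbol of largest $X$-cost removes at least $\mathrm{cost}_X/d\geq A/d\geq A/n\geq \mathrm{cost}_X-A$, restoring the $X$-constraint, while the $Y$-constraint cannot be damaged because deleting a symbol removes a factor at most $1$ from a product --- any subsequence of an $a$-subsequence of $Y$ is again an $a$-subsequence of $Y$. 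This monotonicity under deletion, which your writeup never invokes, is exactly what collapses the repair to one dimension and yields $OPT\in\{d-1,d\}$ as required; it is also in the spirit of how \cite{DBLP:journals/dam/CyganKRRW16} argue, with one probability kept exact against a discretised level for the other. Your remaining infrastructure (the grid $\lambda=A/n^2$, budget $n^2+n$, the completeness bound $\sum_k\lceil x_k/\lambda\rceil\leq A/\lambda+m\leq B$, and the reliance on real-arithmetic logarithms, acceptable in the model of that source though pointedly avoided elsewhere in this paper) carries over to the corrected one-sided scheme essentially unchanged.
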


Their $PTAS$ uses the above result and in case the approximation is guaranteed to be good enough ($d>(1-\epsilon)(d+1)$, which implies that $d>(1-\epsilon)OPT$), it stops. Else, it holds that $\frac{1}{\epsilon}\geq d+1\geq OPT$, and the $PTAS$ exhaustively searches all subsequences of $X$, all subsequences of $Y$, and all possible strings of length $d+1$, for a total complexity of
\[\mathcal{O}\left(Mul_w\left(\frac{B}{\epsilon}\right)\log\left(\frac{B}{\epsilon}\right)|\Sigma|^{\frac{1}{\epsilon}}{\binom{n}{\frac{1}{\epsilon}}}^2\right)\]
$Mul_w(\frac{B}{\epsilon})\log(\frac{B}{\epsilon})$ is the time needed to multiply $d+1$ numbers with at most $B$-bits each, by Lemma~\ref{lem:MultiplyX_Bbits}, and is insignificant compared to the other terms.
Our $EPTAS$ improves the exhaustive search part to 
\[\mathcal{O}\left(Mul_w\left(\frac{B}{\epsilon}\right)\frac{n
}{\epsilon}|\Sigma|^{\frac{1}{\epsilon}}\right)\]
which is polynomial in the input size, in case of bounded alphabets. The following lemma is needed.

\begin{lemma}\label{GivenStringFindSubsequence}
Given a weighted sequence $X$ of length $n$, and a string $s$ of length $d$, it is possible to find the maximum number $a$ such that there exists an increasing subsequence $\pi$ of length $d$ for which $P_X(\pi, s) = a$. The running time of the algorithm is $O(Mul_w(dB)nd)$, where $B$ is the maximum number of bits needed to represent each probability in $X$.
\end{lemma}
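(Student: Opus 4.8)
The plan is to solve the problem by dynamic programming over the positions of $X$ and the prefixes of $s$. I would define $DP[i][j]$ to be the maximum value of $P_X(\pi, s_1\cdots s_j)$ over all increasing sequences $\pi$ of $j$ positions chosen among the first $i$ positions of $X$ (matching the length-$j$ prefix of $s$), adopting the convention $DP[i][0]=1$ for all $i$ (the empty product) and $DP[0][j]=0$ for $j\geq 1$, where this last value serves as a sentinel for an infeasible entry. The quantity requested by the lemma, namely $\max_\pi P_X(\pi,s)$, is then exactly $DP[n][d]$.

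The recurrence I would use is
\[ DP[i][j] = \max\left( DP[i-1][j],\; DP[i-1][j-1]\cdot p^{(X)}_i(s_j) \right), \]
whose two terms correspond respectively to not using position $i$ at all, and to using position $i$ as the $j$-th (last) matched position. I would establish its correctness by induction on $i+j$. The optimal substructure holds because, once we fix that the last of the $j$ matches lies at position $i$, the product $P_X$ factors into the contribution $p^{(X)}_i(s_j)$ of that position times the product over the first $j-1$ matches, all of which lie among the first $i-1$ positions; since $p^{(X)}_i(s_j)\geq 0$, multiplying by it preserves the ordering of candidate prefixes, so the best feasible prefix of length $j-1$ is precisely $DP[i-1][j-1]$. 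Taking the better of the two choices then yields the claimed value.

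For the running time, the key point is the number representation described in the Basic Operations subsection. Each probability $p^{(X)}_i(s_j)$ is stored as a fraction whose numerator and denominator have at most $B$ bits, and every entry $DP[i][j]$ is a product of at most $d$ such probabilities, hence a fraction whose numerator and denominator have $O(dB)$ bits. Computing $DP[i-1][j-1]\cdot p^{(X)}_i(s_j)$ therefore costs $O(Mul_w(dB))$, and the comparison inside the $\max$ is performed by cross-multiplying the two candidate fractions, which again multiplies numbers of $O(dB)$ bits and costs $O(Mul_w(dB))$. With $O(nd)$ table entries, each computed in $O(Mul_w(dB))$ time, the total is $O(Mul_w(dB)\,nd)$, as stated.

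The step I expect to require the most care is the bookkeeping of bit-lengths rather than the dynamic program itself: one must verify that no intermediate quantity ever exceeds $O(dB)$ bits, so that every arithmetic operation genuinely fits within the stated $Mul_w(dB)$ bound, and that all comparisons between fractions are carried out exactly by cross-multiplication, with no division or rounding, in accordance with the model. By contrast, the combinatorial optimal-substructure argument is standard once one observes the monotonicity of the product under multiplication by the nonnegative factor $p^{(X)}_i(s_j)$.
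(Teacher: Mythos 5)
Your proof is correct and follows essentially the same route as the paper: the identical dynamic program $opt_X(i,j)$ with recurrence $\max\{opt_X(i-1,j),\; opt_X(i-1,j-1)\cdot p^{(X)}_i(s_j)\}$ and the same base cases. In fact, your careful accounting of bit-lengths and exact fraction comparisons by cross-multiplication supplies the running-time justification that the paper's proof leaves implicit.
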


\begin{proof}
We use dynamic programming. Let $s_j$ be the string formed by the first $j$ letters of $s$, $c _j$ be the $j$-th letter of $s$ and $opt_X(i,j)$ be the maximum number such that there exists an increasing subsequence $\pi '$ of length $j$ whose last term $\pi '_j$ is at most $i$ and for which $P_X(\pi ',s_j)=opt_X(i,j)$. Since we choose whether $c _j$ is picked from the $i$-th position of $X$, it holds that:
\[opt_X(i,j)=\max\{opt_X(i-1,j), opt_X(i-1,j-1)p^{(X)}_i(c _j)\}\]
For the base cases, $opt_X(i,0)=1$ for all $i$ (we can always form the empty string with certainty, by not picking anything), and $opt_X(0,j)=0$ for $j>0$ (not picking anything never gives us a non-empty string).
We are interested in the value $opt_X(|X|,|s|)$.
\end{proof}

Now we are ready to give our $EPTAS$.

\begin{theorem} \label{thm:eptas}
For any value $\epsilon \in (0,1]$ there exists an $(1-\epsilon)$-approximation algorithm for the WLCS problem which runs in $\mathcal{O}\left(poly(I)+\frac{n}{\epsilon}Mul_w\left(\frac{B}{\epsilon}\right)|\Sigma|^{\frac{1}{\epsilon}}\right)$ time and uses $\mathcal{O}\left(poly(I)\right)$ space, where $I$ is the input size, $n=|X|+|Y|$ and $B$ is the maximum number of bits needed to represent a probability in $X$ and $Y$. Consequently, the WLCS problem admits an $EPTAS$ for bounded alphabets.
\end{theorem}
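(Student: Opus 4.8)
The plan is to keep the high-level skeleton of the Cygan et al.\ $PTAS$ --- first pin down $OPT$ to two consecutive values, then resolve the ambiguity --- but to replace its brute-force enumeration of the subsequences of $X$ and $Y$ by the dynamic program of Lemma~\ref{GivenStringFindSubsequence}. Evaluating a \emph{fixed} candidate string against a whole weighted sequence in polynomial time is precisely what removes the ${\binom{n}{1/\epsilon}}^2$ factor from their bound. Concretely, I would first invoke Lemma~\ref{PTAS} to obtain, in $poly(I)$ time, an integer $d$ with $OPT\in\{d,d+1\}$, and then test whether $d\geq(1-\epsilon)(d+1)$. If this holds, I output $d$; since $OPT\leq d+1$ we get $d\geq(1-\epsilon)(d+1)\geq(1-\epsilon)OPT$, a valid $(1-\epsilon)$-approximation, and the algorithm terminates after only the polynomial-time call.

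The remaining case is $d<(1-\epsilon)(d+1)$, which rearranges to $\epsilon(d+1)<1$, i.e.\ $d+1<1/\epsilon$; both candidate lengths are then bounded by $1/\epsilon$. Here I would decide \emph{exactly} whether $OPT=d$ or $OPT=d+1$ by enumerating every string $s\in\Sigma^{d+1}$, of which there are at most $|\Sigma|^{d+1}\leq|\Sigma|^{1/\epsilon}$. For each $s$ I would run Lemma~\ref{GivenStringFindSubsequence} once on $X$ and once on $Y$ to obtain the largest probabilities with which $s$ matches a length-$(d+1)$ subsequence of each, and compare both against the threshold $a$. If some $s$ clears the threshold in both sequences then $OPT=d+1$; otherwise no common $(d+1)$-subsequence exists and $OPT=d$. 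In this case the output is exact, so the approximation guarantee holds trivially.

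For the bounds, each call to Lemma~\ref{GivenStringFindSubsequence} on a length-$(d+1)$ string costs $\mathcal{O}(Mul_w((d+1)B)\,n\,(d+1))$, which collapses to $\mathcal{O}(\frac{n}{\epsilon}Mul_w(\frac{B}{\epsilon}))$ because $d+1<1/\epsilon$; multiplying by the at most $|\Sigma|^{1/\epsilon}$ candidate strings yields the claimed $\mathcal{O}(\frac{n}{\epsilon}Mul_w(\frac{B}{\epsilon})|\Sigma|^{1/\epsilon})$ term, added to the $poly(I)$ cost of Lemma~\ref{PTAS}. The dynamic-programming table holds $\mathcal{O}(nd)$ numbers, each a product of at most $d$ values of $B$ bits and hence of $\mathcal{O}(dB)$ bits, so the space is $poly(I)$, and candidate strings are handled one at a time. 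Finally, when $|\Sigma|=\mathcal{O}(1)$ the only factor that is super-polynomial in $1/\epsilon$ is $|\Sigma|^{1/\epsilon}=2^{\mathcal{O}(1/\epsilon)}$, so the running time has the form $f(1/\epsilon)\cdot poly(I)$, exactly what an $EPTAS$ requires.

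The step I expect to need the most care is this last accounting: verifying that the bound $d+1<1/\epsilon$ really does let the per-string cost of Lemma~\ref{GivenStringFindSubsequence} shrink to $\mathcal{O}(\frac{n}{\epsilon}Mul_w(\frac{B}{\epsilon}))$, and then checking that all $\epsilon$-dependence cleanly factors as a function of $1/\epsilon$ times a polynomial in $I$, since only then is the $EPTAS$ conclusion justified for bounded alphabets.
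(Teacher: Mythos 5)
Your proposal is correct and matches the paper's own proof essentially verbatim: both invoke Lemma~\ref{PTAS} to pin $OPT$ to $\{d,d+1\}$, output $d$ when $d+1\geq\frac{1}{\epsilon}$ (your condition $d\geq(1-\epsilon)(d+1)$ is algebraically the same), and otherwise enumerate all $|\Sigma|^{d+1}$ candidate strings, testing each via the dynamic program of Lemma~\ref{GivenStringFindSubsequence}. Your accounting of the running time, space, and the $f(1/\epsilon)\cdot poly(I)$ form needed for the $EPTAS$ claim is in fact more explicit than the paper's.
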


\begin{proof}
We begin by using Lemma~\ref{PTAS} to find an $a$-subsequence of length $d$, such that the optimal solution is at most $d+1$. If $d+1\geq \frac{1}{\epsilon}$, we are done, since in that case we have a $\frac{d}{d+1}=1-\frac{1}{d+1}\geq (1-\epsilon)$ approximation. Otherwise, we try all possible strings $s \in |\Sigma|^{d+1}$, and use Lemma~\ref{GivenStringFindSubsequence} to check if any one of them can appear in both weighted sequences with probability at least $a$.
\end{proof}

\begin{corollary}\label{cor:FPT}
$WLCS \in FPT$ for bounded alphabets, parameterized by the solution length.
\end{corollary}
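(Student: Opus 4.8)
The plan is to derive the corollary directly from the $EPTAS$ of Theorem~\ref{thm:eptas} by instantiating $\epsilon$ as a function of the parameter. Here the parameter is the target solution length $k$: we are given a decision instance $(X,Y,a_1,a_2,k)$ and must decide whether $\mathrm{OPT}\geq k$ in time $f(k)\cdot\mathrm{poly}(I)$ for some computable function $f$, where $\mathrm{OPT}$ denotes the length of the longest common $a$-subsequence.

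First I would set $\epsilon=\frac{1}{k+1}$ and run the $(1-\epsilon)$-approximation algorithm of Theorem~\ref{thm:eptas}, obtaining a feasible common $a$-subsequence of some integer length $\ell$. Since the returned string is feasible, $\ell\leq\mathrm{OPT}$ always holds. For the converse, if $\mathrm{OPT}\geq k$ then $\ell\geq(1-\epsilon)\mathrm{OPT}\geq\bigl(1-\tfrac{1}{k+1}\bigr)k=\frac{k^2}{k+1}>k-1$, and since $\ell$ is an integer this forces $\ell\geq k$. Hence $\ell\geq k$ if and only if $\mathrm{OPT}\geq k$, so answering \textsc{yes} exactly when $\ell\geq k$ decides the instance correctly.

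It then remains to check that the running time has the shape $f(k)\cdot\mathrm{poly}(I)$. Substituting $\frac{1}{\epsilon}=k+1$ into the bound of Theorem~\ref{thm:eptas} gives $\mathcal{O}\bigl(\mathrm{poly}(I)+n(k+1)\,Mul_w(B(k+1))\,|\Sigma|^{k+1}\bigr)$. For a bounded alphabet $|\Sigma|=O(1)$, so the term $|\Sigma|^{k+1}$ is a function of $k$ alone, which I would take as $f(k)$. The remaining factor $n(k+1)\,Mul_w(B(k+1))$ is polynomial in the input size, using $k\leq n\leq I$, $B\leq I$, and the fact (Corollary~\ref{Multiplication}) that $Mul_w$ is a polynomial-time routine, so that $Mul_w(B(k+1))=\mathrm{poly}(I)$. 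Thus the whole running time is $f(k)\cdot\mathrm{poly}(I)$, which is exactly membership in $FPT$.

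Since this is essentially a one-line consequence, I do not anticipate a genuine obstacle; the only point requiring care is the boundary case $\mathrm{OPT}=k$, where one must verify that the chosen $\epsilon<\frac{1}{k}$ is small enough that rounding the approximate length up to the nearest integer still certifies $\mathrm{OPT}\geq k$. An alternative and equally direct route avoids the approximation altogether: enumerate all $|\Sigma|^{k}$ strings of length $k$ and, for each, apply Lemma~\ref{GivenStringFindSubsequence} to $X$ and to $Y$ to test feasibility in polynomial time; this again runs in $|\Sigma|^{k}\cdot\mathrm{poly}(I)=f(k)\cdot\mathrm{poly}(I)$ for bounded alphabets.
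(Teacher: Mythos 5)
Your argument is correct and is essentially the paper's own proof made explicit: the paper disposes of this corollary by citing Proposition~2 of \cite{DBLP:journals/cj/Marx08} (an $EPTAS$ implies membership in $FPT$ under the standard parameterization), and the content of that proposition is exactly your instantiation of Theorem~\ref{thm:eptas} with $\epsilon=\frac{1}{k+1}$, together with the integrality argument $\ell>k-1\implies\ell\geq k$ and the observation that $|\Sigma|^{k+1}$ is a function of $k$ alone for bounded alphabets. Your alternative route --- enumerating all $|\Sigma|^{k}$ candidate strings and testing each against $X$ and $Y$ via Lemma~\ref{GivenStringFindSubsequence} --- is also sound and simply mirrors the exhaustive-search phase inside the $EPTAS$ itself, bypassing Lemma~\ref{PTAS}.
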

\begin{proof}
Follows directly from \cite{DBLP:journals/cj/Marx08},~Proposition~2.
\end{proof}

\section{No EPTAS for Unbounded Alphabets}\label{sec:noeptas}

We have already seen that there is no $FPTAS$ for WLCS, even for alphabets of size $2$, unless $P=NP$. We have also shown an $EPTAS$ for bounded alphabets and a $PTAS$ for unbounded alphabets. The natural question that arises is: Is it possible to give an $EPTAS$ for unbounded alphabets?

We answer this question negatively, by proving that WLCS is $W[1]$-hard, meaning that it does not admit an $EPTAS$ (and is in fact not even in $FPT$) unless $FPT=W[1]$ (\cite{DBLP:journals/cj/Marx08},~Corollary~$1$). To show this, we give a $2$-step $FPT$-reduction from Perfect Code, which was shown to be $W[1]$-Complete in \cite{DBLP:journals/ipl/Cesati02}, to $k$-sized Subset Product and then to WLCS. The $k$-sized Subset Product problem is the Subset Product problem with the additional constraint that the target subset must be of size $k$. 

\begin{definition}[Perfect Code]
A perfect code is a set of vertices $V'\subseteq V$ with the property that for each vertex $u\in V$ there is precisely one vertex in $N_G(u)\cap V'$, where $N_G(u)$ is the set of adjacent nodes of $u$ in $G$.
\end{definition}
In the perfect code problem, we are given an undirected graph $G$ and a positive integer $k$, and we need to decide whether $G$ has a $k$-element perfect code. Notice that the definition of a perfect code implies that there is a perfect code iff there is a set $V'\subseteq V$ for which $\bigcup_{u\in V'}{N_G(u)}=V$ and $N_G(u)\cap N_G(v)=\emptyset$ for all $u,v\in V', u\neq v$.
First we show that $k$-sized Subset Product is $W[1]$-hard.

\begin{lemma}
$k$-sized Subset Product is $W[1]$-hard.
\end{lemma}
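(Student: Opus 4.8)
The plan is to give an $FPT$-reduction from Perfect Code, which is $W[1]$-complete, to $k$-sized Subset Product. Given an instance $(G,k)$ of Perfect Code with $G=(V,E)$ and $n=|V|$, the key idea is to encode each vertex by a distinct prime number so that the multiplicative structure of Subset Product can faithfully simulate the set-disjointness and covering conditions of a perfect code. First I would assign to each vertex $u \in V$ a distinct prime $q_u$; these can be the first $n$ primes, all of size polynomial in $n$ by the prime number theorem, so the reduction stays polynomial-time. The fundamental theorem of arithmetic then guarantees that products of distinct primes are in bijection with the subsets of $V$ they represent, which is exactly the rigidity we need.

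The natural encoding is to associate with each candidate vertex $v \in V'$ the integer $\ell_v = \prod_{u \in N_G(v)} q_u$, that is, the product of the primes of its (closed or open, as dictated by the definition used here) neighborhood. The target product is set to $P = \prod_{u \in V} q_u$, the product of all vertex-primes, and the target subset size is $k$. I would then argue the two directions. For the forward direction, if $V'$ is a $k$-element perfect code, then by the characterization stated in the excerpt we have $\bigcup_{v \in V'} N_G(v) = V$ with the neighborhoods pairwise disjoint; hence the multiset of primes appearing in $\prod_{v \in V'} \ell_v$ is exactly each $q_u$ once, so the product equals $P$, using a $k$-sized subset. For the converse, suppose a size-$k$ subset $S$ of the $\ell_v$'s multiplies to $P$. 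Since $P$ is squarefree (a product of distinct primes each to the first power), no prime $q_u$ may appear twice across the chosen neighborhoods, forcing pairwise disjointness; and since every prime divides $P$, every vertex $u$ is covered, forcing $\bigcup_{v \in S} N_G(v) = V$. By the same characterization, $S$ is then a perfect code of size $k$.

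The main obstacle I anticipate is not the combinatorial correctness but the parameter bookkeeping required for an $FPT$-reduction: I must ensure the new parameter ($k$-sized Subset Product's target size) is bounded by a function of the old parameter (here it is literally $k$, so this is immediate), and that the running time is $f(k)\cdot\mathrm{poly}(n)$. A subtler point is handling the distinction between open and closed neighborhoods: the perfect-code definition quoted uses $N_G(u)$, and whether $u \in N_G(u)$ affects whether $\ell_v$ should include $q_v$ itself. I would fix a convention (most cleanly, the closed neighborhood $N_G[v]$, since the covering condition "$\bigcup_{v\in V'} N_G(v)=V$ with pairwise-disjoint neighborhoods" should account for the vertices of $V'$ themselves) and verify that the squarefree-product argument goes through unchanged. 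A final technical care is that these integers, while polynomially many, can be as large as a product of $n$ primes and hence have $\Theta(n\log n)$ bits; this is consistent with the paper's computational model, and Corollary~\ref{Multiplication} guarantees the products can be formed in polynomial time, so the reduction remains valid.
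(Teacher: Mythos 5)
Your proposal is correct and is essentially the paper's own reduction: you assign the first $n$ primes to the vertices, encode each vertex $v$ by the product of the primes of its neighborhood, set the target to the product of all $n$ primes with subset size $k$, and invoke unique factorization (squarefreeness of $P$) for correctness, with the parameter preserved exactly as in the paper. Your added care about open versus closed neighborhoods and about the $O(n\log n)$-bit size of the constructed integers matches the paper's conventions and its use of Corollary~\ref{Multiplication}, so nothing further is needed.
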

\begin{proof}
Let $(G=(V,E),k)$ be an instance of Perfect Code. Suppose that the vertices are $V=\{1,\ldots,n\}$. First of all, we compute the first $n$ prime numbers using the Sieve of Eratosthenes. We denote the $i$-th prime number as $p_i$. The set of positive integers $L=\{L_1,L_2,\ldots,L_n\}$ as well as the positive integer $P$ are defined as follows:
\[L_v=\prod_{u\in N_G(v)}p_u,~P=\prod_{v=1}^{n}p_v\]
Notice that due to the unique prime factorization theorem, a subset of $k$ numbers from the set $L$ have product $P$ iff $G$ has a $k$-element Perfect Code.

The size of our primes is $O(n\log{n})$ due to the prime number theorem. Thus, they require $O(\log{n})$ bits to be represented. Each integer in $L$, as well as in $P$, is computed using Corollary~\ref{Multiplication} in $O(n\log^3{n})$ time, for an overall $O(n^2\log^3{n})$ complexity for our reduction. Since the new parameter $k$ is the same as the old one (no dependence on $n$), our reduction is in fact an $FPT$-reduction.
\end{proof}

Our result for this section is the following.

\begin{theorem}\label{thm:DiagonalWS}
WLCS, parameterized by the length of the solution, is $W[1]$-hard.
\end{theorem}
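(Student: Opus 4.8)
The plan is to give a parameter-preserving $FPT$-reduction from $k$-sized Subset Product (just shown $W[1]$-hard) to WLCS, where the parameter on the WLCS side is the length of the sought common subsequence. Let $(L,P,k)$ be a $k$-sized Subset Product instance with $L=\{L_1,\ldots,L_n\}$. I would build weighted sequences $X,Y$, both of length $n$, over the large alphabet $\{\sigma_1,\ldots,\sigma_n,\beta,\beta'\}$, in a \emph{diagonal} fashion: at position $i$ the only letter carrying multiplicative information is $\sigma_i$, with $p^{(X)}_i(\sigma_i)=\gamma L_i$ and $p^{(Y)}_i(\sigma_i)=\delta/L_i$ for two global scalars $\gamma,\delta$ to be fixed. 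Because each $\sigma_i$ has positive probability only at position $i$ (in both sequences), any string using $\sigma_i$ must match it there; hence a common subsequence that avoids $\beta,\beta'$ corresponds exactly to a subset $R\subseteq\{1,\ldots,n\}$ matched along the diagonal, with $P_X=\gamma^{|R|}\prod_{i\in R}L_i$ and $P_Y=\delta^{|R|}/\prod_{i\in R}L_i$.

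The leftover probability at each position is the crux, and I would route it to the two sink letters: set $p^{(X)}_i(\beta)=1-\gamma L_i$ with $\beta$ having probability $0$ everywhere in $Y$, and symmetrically $p^{(Y)}_i(\beta')=1-\delta/L_i$ with $\beta'$ having probability $0$ everywhere in $X$. Then any string containing $\beta$ has $P_Y=0$ and any string containing $\beta'$ has $P_X=0$, so neither can appear in a feasible solution; this forces every feasible common subsequence to be $\sigma$-only, which is exactly what makes the clean product analysis above applicable. I would choose $\gamma,\delta$ so that $\gamma L_i\le 1$ and $\delta/L_i\le 1$ for all $i$ (for instance $\gamma=1/(2L_{\max})$, $\delta=L_{\min}$), which in particular guarantees $\gamma\delta<1$; all of $\gamma,\delta,a_1=\gamma^k P,a_2=\delta^k/P$ are then ratios of integers computable in polynomial time by Corollary~\ref{Multiplication}, so no roots or logarithms enter.

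With thresholds $a_1=\gamma^k P$ and $a_2=\delta^k/P$, correctness is a short product comparison. For a $\sigma$-only subsequence on a set $R$ of size $t$, the conditions $P_X\ge a_1$ and $P_Y\ge a_2$ read $\prod_{i\in R}L_i\ge\gamma^{k-t}P$ and $\prod_{i\in R}L_i\le\delta^{t-k}P$. When $t>k$ the inequality $\gamma\delta<1$ makes this interval empty, so no such solution is feasible; when $t<k$ the subsequence is too short; and when $t=k$ the two inequalities collapse to $\prod_{i\in R}L_i=P$. Hence a common subsequence of length at least $k$ exists iff some size-$k$ subset of $L$ has product $P$, and a yes-instance yields a solution of length exactly $k$. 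Since the WLCS parameter equals $k$ and the construction runs in polynomial time, this is an $FPT$-reduction; it produces an $(a_1,a_2)$-WLCS instance, from which one passes to single-threshold WLCS through the parameter-preserving reduction of Appendix~\ref{app_sec:one_threshold}.

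I expect the main obstacle to be precisely the interaction I have isolated above. Unlike the NP-hardness construction, where every position is used and the solution length is $\Theta(n)$, here the length must encode the parameter $k$, so the reduction cannot afford a pick/skip letter at every position. The difficulty is that forcing $p^{(X)}_i(\sigma_i)=\gamma L_i$ (needed so that the threshold depends only on $k$ and not on \emph{which} elements are selected) leaves a large, position-dependent leftover mass that a padding strategy could otherwise exploit to reach length $k$ for free; the sink-letter gadget, together with the single inequality $\gamma\delta<1$ that simultaneously rules out over-long selections, is what I would rely on to neutralize this.
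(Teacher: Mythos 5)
Your reduction is correct and is, at its core, the paper's own construction: diagonal weighted sequences in which position $i$ carries $\gamma L_i$ in $X$ and $\delta/L_i$ in $Y$, sink letters with probability zero in the opposite sequence absorbing the leftover mass, and thresholds that trap $\prod_{i\in R}L_i$ between $P$ from below and from above. The divergence is in how the two-sided constraint is packaged. The paper uses a single scale $\gamma=\delta=1/M$ with $M=m^{k+1}$ (after discarding the trivial case $m^k\le P$) and stays inside single-threshold WLCS by appending one extra position with $p^{(X)}_{n+1}(n+1)=1/P^2$ and $p^{(Y)}_{n+1}(n+1)=1$, threshold $a=1/(PM^k)$ and solution length $k+1$ --- the same asymmetric gadget as in its NP-completeness proof; there, any feasible length-$(k+1)$ string must use position $n+1$, since omitting it forces $P_Y\le 1/M^{k+1}<a$ because $P<M$. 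You instead output an $(a_1,a_2)$-instance with $a_1=\gamma^k P$ and $a_2=\delta^k/P$, rule out lengths $t>k$ via the clean observation that $(\gamma\delta)^{t-k}<1$ empties the feasibility interval, and then invoke the Appendix equivalence to reach single-threshold WLCS, at the cost of one extra unit of parameter; this is an equally valid $FPT$-reduction and arguably a cleaner separation of concerns, since the product-trapping argument and the threshold-merging argument are kept independent. One small repair is needed: the Appendix lemma is stated for $a_1<a_2$, and your choice $\gamma=1/(2L_{\max})$, $\delta=L_{\min}$ does not guarantee that orientation (e.g.\ $L=\{1,100\}$, $k=1$, $P=100$ gives $a_1=1/2$ but $a_2=1/100$). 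This is harmless: when $a_1>a_2$ swap the roles of $X$ and $Y$ (the appendix construction is symmetric in the two sequences), or shrink $\gamma$ to force $a_1<a_2$, and observe that $a_1=a_2$ needs no conversion at all.
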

\begin{proof}
To prove the theorem we create diagonal weighted sequences. That is, we require each letter to appear only in one position and vice-versa. In this way, the subsequences picked for $X$ and $Y$ are the same. The above rule is broken by the addition of two auxiliary letters that are there to make the probabilities add up to $1$ in each position. This creates no problem because we make sure that these letters are never picked. Finally, we force the product to be equal to our target, by forcing it to be at most our target and at least our target at the same time.

More formally, let $(L=\{L_1,L_2,\ldots,L_n\},k,P)$ be an instance of the $k$-sized Subset Product problem and let $M=m^{k+1}$, where $m$ is the maximum number in set $L$. Notice that if $m^k\leq P$ then we only need to check the product of the highest $k$ numbers of $L$, which means the problem is solvable in polynomial time. Thus we can assume that $M\geq m^k>P$. The alphabet of $X, Y$ is $\Sigma=\{1,2,\ldots,n, n+1, n+2, n+3\}$ and we set $a=\frac{1}{PM^k}$.
\begin{flalign*}
p^{(X)}_i(i)& =\frac{L_i}{M},~1\leq i\leq n &
p^{(Y)}_i(i)& =\frac{1}{ML_i},~1\leq i\leq n \\
p^{(X)}_{n+1}(n+1)& =\frac{1}{P^2} &
p^{(Y)}_{n+1}(n+1)& =1 \\
p^{(X)}_i(n+2)& =1-p^{(X)}_i(i),~1\leq i\leq n+1 &
p^{(Y)}_i(n+3)& =1-p^{(Y)}_i(i),~1\leq i\leq n+1
\end{flalign*}
All non-specified probabilities are equal to 0. Notice that symbols $n+2$ and $n+3$ are used to guarantee that probabilities sum up to $1$.

We show that the instance $(X,Y,a,k+1)$ has a solution iff $(L,k,P)$ has a solution. Suppose there exists a solution to $(L,k,P)$. Then, there exists an increasing subsequence $\pi=(i_1,\ldots,i_k)$ such that $\prod_{j=1}^{k}{L_{i_j}}=P$. Let $\pi'$ be $\pi$ extended by the number $i_{k+1}=n+1$ and $s$ be the string $i_1i_2\ldots i_{k+1}$. It holds that $P_{X}(\pi', s)=P_{Y}(\pi', s)=a$.

Conversely, suppose there exists a solution to $(X,Y,a,k+1)$. Then there exist increasing subsequences $\pi=(i_1,\ldots,i_{k+1}), \pi'=(j_1,\ldots,j_{k+1})$ and a string $s$ such that $P_{X}(\pi, s)\geq a, P_{Y}(\pi', s)\geq a$. First of all, notice that, due to $p^{(X)}_i(n+3)=p^{(Y)}_i(n+2)=0$ for all $i$, $s$ does not contain letters $n+2$ and $n+3$, which leaves only one choice for every position. Also each letter appears only once in each sequence, and in the same position. Thus, $\pi=\pi'$, and due to our construction the $i$-th letter of $s$ is the $i$-th member of $\pi$. Finally, not picking position $n+1$ would result in $P_Y(\pi,s)<a$ due to the fact that $P<M$.
Thus, the last letter of $s$ is $n+1$. It holds that:
\[P_X(\{i_1,\ldots,i_{k+1}\},s)\geq a\implies \frac{\prod_{i=1}^{k}{L_{\pi_i}}}{P^2M^k}\geq \frac{1}{PM^k}\implies \prod_{i=1}^{k}{L_{\pi_i}}\geq P\]
\[P_Y(\{i_1,\ldots,i_{k+1}\},s)\geq a\implies \frac{1}{M^k\prod_{i=1}^{k}{L_{\pi_i}}}\geq \frac{1}{PM^k}\implies \prod_{i=1}^{k}{L_{\pi_i}}\leq P\]
The above two inequalities imply a $k$-sized subset of $L$ with product equal to $P$.

The reduction is a polynomial-time one, due to Corollary~\ref{Multiplication}. More than that, it is an $FPT$-reduction since the new parameter $k$ is equal to the old parameter incremented by one, and thus has no dependence on $n$.
\end{proof}

\section{Matching Conditional Lower Bound on any PTAS}\label{sec:lowerbound}
In the $d$-SUM problem, we are given $N$ numbers and need to decide whether there exists a $d$-tuple that sums to zero. Patrascu and Williams \cite{DBLP:conf/soda/PatrascuW10} proved that any algorithm for solving the $d$-SUM problem requires $n^{\Omega(d)}$ time, unless the Exponential Time Hypothesis ($ETH$) fails. To show this, they first proved a hardness result for a variant of 3-SAT, the sparse 1-in-3 SAT.

\begin{definition}[Sparse 1-in-3 SAT]
Given a boolean formula with $n$ variables and $O(n)$ clauses in 3 CNF form, where each variable appears in a constant number of clauses, determine whether there exists an assignment of the variables such that each clause is satisfied by exactly one variable.
\end{definition}

They first prove the following hardness result under $ETH$.

\begin{proposition}\label{PatrascuProposition}
Under $ETH$, there is an (unknown) constant $s_3$ such that there exists no algorithm to solve sparse 1-in-3 SAT in $\mathcal{O}(2^{\delta n})$ time for $\delta <s_3$.
\end{proposition}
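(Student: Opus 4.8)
The plan is to derive the proposition directly from $ETH$ for 3-SAT by composing two standard tools: the Sparsification Lemma of Impagliazzo, Paturi and Zane, and a local, occurrence-preserving gadget reduction from 3-SAT to 1-in-3 SAT. By hypothesis, $ETH$ supplies a constant $s>0$ such that general 3-SAT on $n$ variables admits no $\mathcal{O}(2^{\delta n})$ algorithm for $\delta < s$. My target threshold $s_3$ will be $s$ (or rather its sparse analogue) divided by the constant variable blow-up of the reduction, and the whole argument is a contrapositive: a too-fast algorithm for sparse 1-in-3 SAT would be pulled back, first through the gadget reduction and then through sparsification, into a too-fast algorithm for 3-SAT, contradicting $ETH$.

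First I would invoke the Sparsification Lemma to pass from 3-SAT to \emph{sparse} 3-SAT, i.e.\ instances with $O(n)$ clauses. The lemma states that, for every $\epsilon>0$, any 3-CNF formula on $n$ variables can be written, in $\mathcal{O}(2^{\epsilon n})$ time, as a disjunction of at most $2^{\epsilon n}$ formulas, each having at most $c(\epsilon)\,n$ clauses. Consequently, an $\mathcal{O}(2^{\delta n})$ algorithm for linear-clause 3-SAT would yield an $\mathcal{O}(2^{(\delta+\epsilon)n})$ algorithm for arbitrary 3-SAT; since $\epsilon$ is arbitrary, sparse 3-SAT inherits its own $ETH$ threshold $s'>0$ below which no $\mathcal{O}(2^{\delta n})$ algorithm exists.

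Next I would reduce sparse 3-SAT to sparse 1-in-3 SAT by a gadget applied independently to each clause. Each 3-SAT clause is replaced by a constant number of 1-in-3 clauses over its literals together with a constant number of fresh auxiliary variables; a further standard gadget (tying a new variable to the negation of an existing one) eliminates negations, so that the output is a positive instance matching our definition, in which each clause must be satisfied by exactly one variable. Because the gadget is purely local, the number of variables grows by only a constant factor $C$, the number of clauses stays linear, and---crucially---each variable still occurs in a bounded number of clauses: an original variable occurs in constantly many clauses of the source (by sparsity), each contributing constantly many occurrences, while every auxiliary variable lives inside a single gadget. Thus the output is genuinely a sparse 1-in-3 SAT instance, and a satisfying assignment of one side translates to the other.

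Finally I would combine the thresholds. If sparse 1-in-3 SAT on $m$ variables could be solved in $\mathcal{O}(2^{\delta m})$ time, then, since the reduction produces $m\le Cn$ variables from an $n$-variable sparse 3-SAT instance, sparse 3-SAT would be solvable in $\mathcal{O}(2^{\delta C n})$ time; by the previous step this forces $\delta C \ge s'$, so setting $s_3 = s'/C$ yields the claim. The main obstacle is the middle step: engineering the clause gadget so that it \emph{simultaneously} encodes the disjunction as exact-one constraints, removes negations to land in the monotone version, and keeps both the variable blow-up and the per-variable occurrence count bounded by absolute constants. Once such a local, degree-preserving gadget is in hand, the threshold bookkeeping above is routine.
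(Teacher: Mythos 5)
The paper offers no proof of this proposition at all---it is imported verbatim from Patrascu and Williams \cite{DBLP:conf/soda/PatrascuW10}---and your sketch reconstructs essentially the argument of that cited source: the Sparsification Lemma followed by a local, constant-blow-up gadget reduction from (sparse) 3-SAT to 1-in-3 SAT, with $s_3$ absorbing the constant factor in the number of variables, so your route matches the intended one. One small repair: having only $O(n)$ clauses does \emph{not} by itself imply that each variable occurs in a constant number of clauses (a linear-size formula can still contain a variable of degree $\Theta(n)$), so your parenthetical ``by sparsity'' should instead invoke the strong form of the Sparsification Lemma, which explicitly bounds the number of clauses containing each variable, or add a standard variable-splitting gadget (chained copies forced equal by local constraints) before applying the clause gadget.
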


By assuming an $n^{\mathcal{O}(d)}$ time algorithm for $d$-SUM they disproved the above fact, which cannot happen under $ETH$. We use the same technique for proving an $n^{\Omega(k)}$ lower bound for $k$-sized Subset Product.

\begin{lemma}\label{LowerBoundSubsetProduct}
Assuming the $ETH$, the problem of $k$-sized Subset Product cannot be solved in $\mathcal{O}(n^{\frac{s_3k}{101}})$ time on instances satisfying $k<n^{0.99}$ and each number in the input set $L$ has $\mathcal{O}\left(\log{n}(\log{k}+\log{\log{n}})\right)$ bits, where $n$ is the size of $L$, and $P$ is the target which can be arbitrarily big.
\end{lemma}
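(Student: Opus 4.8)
The plan is to follow the Patrascu--Williams scheme that underlies Proposition~\ref{PatrascuProposition}, giving a reduction from sparse 1-in-3 SAT to $k$-sized Subset Product in which products of prime powers play the role that sums of digit-vectors play for $d$-SUM. I would start from a sparse 1-in-3 SAT instance with $V$ variables and $m=O(V)$ clauses and fix a number of groups $k$. Partition the variables into $k$ groups of $V/k$ variables each, and for every group $g$ enumerate all $2^{V/k}$ partial assignments of its variables; each such partial assignment becomes one element of $L$, so that $n=|L|=k\cdot 2^{V/k}$. The construction should force any size-$k$ subset with the prescribed product to consist of exactly one partial assignment per group, hence of a complete assignment of all $V$ variables, and the product condition should hold precisely when that assignment is an exact-$1$ satisfying assignment.

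For the encoding I would introduce one distinct prime $q_g$ per group and one distinct prime $r_c$ per clause. The element of $L$ corresponding to a partial assignment $\alpha$ in group $g$ is set to $q_g\cdot\prod_c r_c^{t_c(\alpha)}$, where $t_c(\alpha)$ is the number of literals of clause $c$ made true by $\alpha$ (only variables of group $g$ can contribute), and the target is $P=\prod_{g=1}^k q_g\cdot\prod_{c=1}^m r_c$. By unique factorization, matching the exponent $1$ of each $q_g$ forces choosing exactly one element from each group (so the subset has size exactly $k$ and the chosen partial assignments together assign every variable), while matching the exponent $1$ of each $r_c$ forces the total number of satisfied literals in clause $c$ to be exactly $1$. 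Since the exponents of distinct primes add independently under multiplication, there are no carries to manage, and correctness of the reduction follows in both directions.

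The delicate point is the parameter bookkeeping. From $n=k\,2^{V/k}$ we get $V/k=\log_2(n/k)$ and hence $V=O(k\log n)$, so there are $k+m=O(V)=O(k\log n)$ primes in play; by the prime number theorem the largest is $O(k\log n\cdot\log(k\log n))$, i.e.\ each prime needs only $O(\log k+\log\log n)$ bits. Crucially, sparsity of the SAT instance means each group's $V/k=O(\log n)$ variables touch only $O(\log n)$ clauses, so each element of $L$ is a product of $O(\log n)$ prime powers with exponents at most $3$; combining this count with the per-prime bit-length yields the claimed $O(\log n(\log k+\log\log n))$ bits per number. One also checks that a legitimate split exists with $k<n^{0.99}$, and that the whole reduction runs in polynomial time via Corollary~\ref{Multiplication}. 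I expect this bit-length control, balancing the prime-number-theorem bound against the sparsity count, to be the main obstacle.

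Finally, I would derive the contradiction with $ETH$. Assuming an $\mathcal{O}(n^{s_3 k/101})$ algorithm and using $\log_2 n=\log_2 k+V/k$, the time to solve the produced SAT instance becomes
\[
n^{s_3 k/101}=2^{(s_3/101)(V+k\log_2 k)} .
\]
The constraint $k<n^{0.99}$ gives $0.01\log_2 k<0.99\,V/k$, that is $k\log_2 k<99V$, so the exponent is below $(s_3/101)\cdot 100V$ and the entire computation runs in $2^{\delta V}$ time with $\delta=100 s_3/101<s_3$. This would solve sparse 1-in-3 SAT faster than Proposition~\ref{PatrascuProposition} allows, contradicting $ETH$; hence no such algorithm exists.
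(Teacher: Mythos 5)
Your proposal is correct and follows essentially the same route as the paper's proof: the same partition into $k$ blocks yielding $n=k\,2^{V/k}$ identifiers, the same encoding with one prime per block and one prime per clause raised to the number of satisfied literals, target equal to the product of all these primes to the first power, the same sparsity-based bit-length accounting, and the same closing arithmetic (your inequality $k\log_2 k<99V$ is exactly the paper's $k<2^{99\lceil N/k\rceil}$ in logarithmic form). The only detail the paper makes explicit that you leave implicit is the assumption $k>\frac{1}{s_3}$, which ensures the $\mathcal{O}(2^{V/k})$-time (up to polynomial factors) enumeration of the $n$ identifiers is itself dominated by $2^{\delta V}$ with $\delta<s_3$.
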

\begin{proof}
Let $f$ be a sparse 1-in-3 SAT instance with $N$ variables and $M=\mathcal{O}(N)$ clauses, and $k>\frac{1}{s_3}$. Conceptually, we split the variables of $f$ into $k$ blocks of equal size - apart from the last block that may have smaller size. Each block contains at most $\lceil \frac{N}{k} \rceil$ variables, and thus there are at most $2^{\lceil \frac{N}{k} \rceil}$ different assignments of values to the group-of-variables within a block. For each block and for each one of these assignments we generate a number which serves as an identifier of the corresponding block and assignment. Thus, there are $n=k2^{\lceil \frac{N}{k} \rceil}$ different identifiers.

Let $p_i$ be the $i$-th prime number. In order to compute an identifier, we initialize it to $p_b$, where $b$ is the index of the identifier's corresponding block. Then, we run through all of the $M=\mathcal{O}(N)$ clauses and do the following: suppose we process the $i$-th clause and let $0\leq j\leq 3$ be the number of variables of the identifier's corresponding assignment that satisfy the clause. We update the identifier by multiplying it with $p_{k+i}^j$.

Since each variable appears only in a constant number of clauses, each identifier is a product of $\mathcal{O}(\frac{N}{k})$ numbers. The prime number theorem guarantees $\mathcal{O}(\log{N})$ bits to represent each factor, which means the identifiers have $\mathcal{O}(\frac{N}{k}\log{N})$ bits. Using the fact that $n=k2^{\lceil \frac{N}{k} \rceil}$, each identifier is represented by $\mathcal{O}\left(\log{n}(\log{k}+\log{\log{n}})\right)$ bits.

These $n$ identifiers, along with the target $P=\prod_{i=1}^{k+M}p_i$ (recall that $p_i$ is the $i$-th prime number), form a $k$-sized Subset Product instance. This preprocessing step costs $\mathcal{O}(2^{\frac{N}{k}})$ time, ignoring polynomial terms, which is more efficient than $\mathcal{O}(2^{s_3N})$.

Due to the unique prime factorization, a solution to the $k$-sized Subset Product corresponds to a solution in $f$ and vice-versa. If the running time of the $k$-sized Subset Product was $\mathcal{O}(n^{\frac{s_3k}{101}})$ then we could solve the above instance in $\mathcal{O}((k2^{\frac{N}{k}})^{\frac{s_3k}{101}})$ time.

Since $k=\frac{n}{2^{\lceil \frac{N}{k} \rceil}}$ and $k<n^{0.99}$, it follows that $\frac{n}{2^{\lceil \frac{N}{k} \rceil}}<n^{0.99}\implies n^{0.99}<2^{99\lceil \frac{N}{k} \rceil}$. But $k<n^{0.99}$, which means $k<2^{99\lceil \frac{N}{k} \rceil}$.

Thus the previous running time becomes $\mathcal{O}(2^{\frac{100}{101}s_3N})$. Both the preprocessing step and the solution of the $k$-sized Subset Product can be achieved in time $\mathcal{O}(2^{\delta N})$, where $\delta <s_3$. However, this would violate Proposition~\ref{PatrascuProposition}.
\end{proof}

Using the above, we are ready to prove our (matching) lower bound, conditional on $ETH$.

\begin{theorem}
Under $ETH$, there is no $PTAS$ for WLCS with running time $|I|^{o(\frac{1}{\epsilon})}$, where $|I|$ is the input size in bits.
\end{theorem}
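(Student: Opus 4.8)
The plan is to contradict Lemma~\ref{LowerBoundSubsetProduct} (and hence $ETH$) by showing that a $PTAS$ running in $|I|^{o(1/\epsilon)}$ would solve $k$-sized Subset Product too fast on the hard family of that lemma. I would chain the polynomial-time reduction of Theorem~\ref{thm:DiagonalWS} with the hypothetical $PTAS$, tuning the approximation parameter $\epsilon$ as a function of $k$ so that a single approximate call decides the Subset Product instance.

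First I would take an instance $(L,k,P)$ of $k$-sized Subset Product from Lemma~\ref{LowerBoundSubsetProduct}, so that $n=|L|$ satisfies $k<n^{0.99}$ and each $L_i$ uses $\mathcal{O}(\log n(\log k+\log\log n))$ bits. Applying Theorem~\ref{thm:DiagonalWS} yields a WLCS instance $(X,Y,a,k+1)$ whose optimum length equals $k+1$ exactly when $(L,k,P)$ is a YES-instance and is at most $k$ otherwise: the diagonal structure forces the length to be at most $k+1$, and the reduction shows a length-$(k+1)$ solution exists iff some $k$-subset has product $P$. The crucial quantitative step is to bound the WLCS input size $|I|$. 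The sequences have length $n+1$ over an alphabet of size $n+3$, with only $\mathcal{O}(1)$ nonzero probabilities per position, each a fraction of $\mathcal{O}(k\log n(\log k+\log\log n))$ bits, while the threshold $a=\frac{1}{PM^k}$ uses $\mathcal{O}(k^2\log n(\log k+\log\log n))$ bits. Since $k<n^{0.99}$, every factor is polynomial in $n$, so $|I|=n^{\mathcal{O}(1)}$ with the constant in the exponent independent of $k$.

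Next I would set $\epsilon=\frac{1}{k+2}$, so that any $(1-\epsilon)$-approximation must return length strictly greater than $k$ when $OPT=k+1$ (because $(1-\epsilon)(k+1)>k$ and lengths are integers), and at most $k$ when $OPT\le k$. Hence one run of the $PTAS$ at this $\epsilon$ decides the Subset Product instance. With $\frac{1}{\epsilon}=\Theta(k)$ and $|I|=n^{\mathcal{O}(1)}$, the assumed running time becomes $|I|^{o(1/\epsilon)}=\left(n^{\mathcal{O}(1)}\right)^{o(k)}=n^{o(k)}$, and adding the polynomial reduction time keeps the total at $n^{o(k)}$ for large $k$. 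Since the exponent $o(k)$ eventually drops below $\frac{s_3 k}{101}$, this solves $k$-sized Subset Product in $\mathcal{O}\!\left(n^{\frac{s_3 k}{101}}\right)$ time on the hard instances, contradicting Lemma~\ref{LowerBoundSubsetProduct} and therefore $ETH$.

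I expect the main obstacle to be the bookkeeping of the second paragraph: one must confirm that $|I|$ stays $n^{\mathcal{O}(1)}$ with an exponent \emph{not} depending on $k$, since a hidden $k$ in the exponent of $n$ would interact with the $o(1/\epsilon)=o(k)$ factor and destroy the clean $n^{o(k)}$ bound that makes the contradiction work. The secondary delicate point is the gap argument, namely ensuring the YES/NO optima differ by exactly one and that the choice $\epsilon=\Theta(1/k)$ separates them, which is precisely what ties the approximation quality to the parameter $k$ in which the $ETH$-based lower bound of Lemma~\ref{LowerBoundSubsetProduct} is stated.
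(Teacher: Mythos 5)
Your proposal is correct and follows essentially the same route as the paper's own proof: invoke the reduction of Theorem~\ref{thm:DiagonalWS}, call the hypothetical $PTAS$ once with $\epsilon=\Theta(1/k)$ (you pick $\frac{1}{k+2}$, the paper picks $\frac{1}{2(k+1)}$, which is immaterial since both separate $OPT\ge k+1$ from $OPT\le k$ by integrality), and derive an $n^{o(k)}$ algorithm contradicting Lemma~\ref{LowerBoundSubsetProduct}. Your explicit bookkeeping that $|I|=n^{\mathcal{O}(1)}$ with exponent independent of $k$ is a point the paper dispatches with the blanket remark that the reduction is polynomial-time, so you are if anything slightly more careful there; your side claim that the YES-optimum \emph{equals} $k+1$ is unproven but also unneeded, since $OPT\ge k+1$ already forces the $PTAS$ output above $k$.
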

\begin{proof}
Suppose that such an algorithm $A(I,\epsilon)$ existed. Let $R()$ be the polynomial time reduction from $k$-sized Subset Product to WLCS given in the proof of Theorem~\ref{thm:DiagonalWS}. Then, there is a solution to $k$-sized Subset Product iff there is a solution to WLCS of size $k+1$, or, equivalently, iff the optimal solution to WLCS is at least $k+1$.

Using the hypothetical $A(I,\epsilon)$ with an appropriate value of $\epsilon$, we solve $k$-sized Subset Product more efficiently than possible, thus reaching a contradiction.

Consider the following algorithm for $k$-sized Subset Product, where there are $|L|$ numbers in the input, each having $\mathcal{O}\left(\log{|L|}(\log{k}+\log{\log{|L|}})\right)$ bits and $k<|L|^{0.99}$. Given an instance $(L,k,P)$, we define the instance for the WLCS to be $I=R(L,k,P)$. We run $A(I,\frac{1}{2(k+1)})$ and if the output is at least $k+1$ we return that $(L,k,P)$ is satisfied, otherwise we return that it cannot be satisfied.

Note that if $k$-sized Subset Product is solvable, then $OPT(I)\geq k+1$, and the value output by $A$ is at least $(1-\frac{1}{2(k+1)})(k+1)=k+\frac{1}{2}>k$. Thus, the value output by $A$ is at least $k+1$. On the other hand, if $k$-sized Subset Product is not solvable, then $OPT(I)< k+1$, and obviously the value output by $A$ is at most k.

Thus we found an algorithm for $k$-sized Subset Product whose running time is $|I|^{o(k)}$. Since $I$ is obtained by a polynomial time reduction, its size is bounded by a polynomial in $|(L,k,P)|$. Therefore, the above running time becomes $|(L,k,P)|^{o(k)}$. Under our assumptions, this becomes $|L|^{o(k)}$, which is not feasible under $ETH$, due to Lemma~\ref{LowerBoundSubsetProduct}.
\end{proof}

\section{Conclusion} \label{sec:conclusion}

In this paper we prove NP-Completeness for the WLCS decision problem, and give a $PTAS$ along with a matching conditional lower bound for the optimization problem. In the most usual setting, where the alphabet size is constant, the above $PTAS$ is in fact an $EPTAS$, and it is known that no $FPTAS$ can exist unless $P=NP$. In the Appendix we give a transformation such that algorithms for the WLCS problem can also be applied for the $(a_1,a_2)$-WLCS problem.

In proving that WLCS does not admit any $EPTAS$, we proved that it is $W[1]-hard$. It may be interesting to determine the exact complexity of WLCS in the $W-hierarchy$.

\bibliography{WLCS}

\appendix
\section{One Threshold is Enough} \label{app_sec:one_threshold}

For clarity purposes, some proofs and technical discussions are moved in this appendix. In particular, in this section we show that $(a_1,a_2)$-WLCS and WLCS are equivalent, thus one threshold is enough. Furthermore, we show that the rounding errors introduced by working with reals (logarithms and roots) may cause a similar algorithm from a paper by Cygan et al. \cite{DBLP:journals/dam/CyganKRRW16} to err if standard rounding is used.

In the following, $B$ corresponds to the maximum number of bits to represent a number in the input (a probability or a symbol of the alphabet). $B$ is not to be confused with the word-size $w$ since an input number may need many words to be represented.

\begin{lemma}
Given an instance $(X,Y,a_1,a_2,k)$ of $(a_1,a_2)$-WLCS $(a_1 < a_2)$, it is possible to reduce it to an instance $(X',Y',a,k+1)$ of WLCS. The construction of $X'$ and $Y'$ requires $\mathcal{O}(n|\Sigma|Mul_w(B))$ time, while parameter $a$ is computed in $\mathcal{O}(Mul_w(nB)\log{(nB)})$ time, where $n=|X|+|Y|$ is the total length of the weighted sequences $X$ and $Y$, while $B$ is the maximum number of bits needed to represent an input number.
\end{lemma}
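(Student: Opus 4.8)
The plan is to turn the two thresholds $a_1<a_2$ into a single threshold by \emph{appending one forced symbol} and simultaneously \emph{down-scaling} one sequence so that this symbol is provably used by every long solution. I would introduce a fresh letter $\diamond$, extend each sequence by exactly one position, and arrange that the longest common subsequence grows from length $k$ to length $k+1$ with $\diamond$ always occupying the last slot. The guiding identity is that an original solution $s$ of length $k$ (so $P_X(\cdot,s)\ge a_1$ and $P_Y(\cdot,s)\ge a_2$) becomes the string $s\diamond$ at a common threshold $a$, and conversely every single-threshold solution of length $k+1$ is forced to have this shape.

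Concretely, I would first fix a rational $\lambda$ with $0<\lambda<a_1$ (for instance $\lambda=a_1/2$, whose bit-length stays $\mathcal{O}(B)$), and set $a=\lambda^k a_1$ and $q=\lambda^k a_1/a_2$; note that $q<1$ since $a_1<a_2$ and $\lambda\le 1$. I build $X'$ by multiplying every probability of $X$ by $\lambda$ and placing the leftover mass $1-\lambda$ on a dummy letter $\tau_X$ at each position, then appending a final position carrying $\diamond$ with probability $1$. I build $Y'$ by leaving the probabilities of $Y$ untouched and appending a final position carrying $\diamond$ with probability $q$ and the dummy $\tau_Y$ with probability $1-q$. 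The letters $\tau_X,\tau_Y$ are assigned probability $0$ in the opposite sequence, so they can never occur in a common subsequence, while $\diamond$ occurs only in the two appended positions. All probabilities sum to $1$ by construction.

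For the forward direction, an original solution $s$ of length $k$ yields $P_{X'}(\cdot,s\diamond)=\lambda^k P_X(\cdot,s)\ge\lambda^k a_1=a$ and $P_{Y'}(\cdot,s\diamond)=q\,P_Y(\cdot,s)\ge q\,a_2=a$, so $s\diamond\in SUBS(X',a)\cap SUBS(Y',a)$ has length $k+1$. The crux, and the step I expect to be the main obstacle, is the converse: I must \emph{force} every common $a$-subsequence of length $\ge k+1$ to end in $\diamond$ and to come from an original length-$k$ solution. This is exactly what the down-scaling buys. Any common subsequence avoiding $\diamond$ uses only original letters, so if its length is $\ell\ge k+1$ its $X'$-probability is at most $\lambda^{\ell}\le\lambda^{k+1}<\lambda^k a_1=a$ (using $\lambda<a_1$ and that probabilities are at most $1$), hence it fails the threshold; therefore $\diamond$ must be used, and since $\diamond$ appears only in the last position the subsequence is $s\diamond$. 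The same bound $\lambda^{k+1}<a$ forbids using more than $k$ original positions, so $|s|=k$ exactly. Finally $P_{X'}(\cdot,s\diamond)\ge a$ gives $P_X(\cdot,s)\ge a_1$ and $P_{Y'}(\cdot,s\diamond)\ge a$ gives $P_Y(\cdot,s)\ge a_2$, recovering an original solution.

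For the running time, forming $X'$ rescales at most $n|\Sigma|$ probabilities, each a single $B$-bit multiplication, for $\mathcal{O}(n|\Sigma|Mul_w(B))$ time, while appending the two positions and the dummies costs only $\mathcal{O}(|\Sigma|)$. Computing $a=\lambda^k a_1$ and $q=\lambda^k a_1/a_2$ amounts to multiplying $k\le n$ copies of $\lambda$ (each $\mathcal{O}(B)$ bits) together with a constant number of further $B$-bit factors; by Lemma~\ref{lem:MultiplyX_Bbits} (equivalently Corollary~\ref{Multiplication}) this costs $\mathcal{O}(Mul_w(nB)\log(nB))$, matching the stated bound. The numbers involved may grow to $\mathcal{O}(kB)$ bits, which our model permits, and since all quantities remain exact rationals no rounding is introduced.
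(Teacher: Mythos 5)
Your proposal is correct and follows essentially the same route as the paper's proof: append one forced fresh letter (the paper's $'\%'$, your $\diamond$), down-scale by $m=a_1/2$ so that any length-$(k+1)$ solution must use it, add zero-crossing dummy letters to restore probability mass, and set $a=m^k a_1$, with the identical timing analysis via Corollary~\ref{Multiplication}. The only (harmless) difference is that the paper scales \emph{both} sequences by $m$ and gives the appended letter probability $a_1/a_2$ in $Y'$, whereas you scale only $X$ and absorb the factor $m^k$ into the appended probability $q=\lambda^k a_1/a_2$; the forcing argument then rests on the $X'$-threshold alone, which suffices since a common subsequence must clear both thresholds.
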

\begin{proof}
We first provide a sketch of the proof. Our goal is to use the same weighted sequences with one additional position at the end. We introduce a new letter ($'\%'$) which only appears in this position, and we make sure that any correct algorithm picks it, by making its probability very appealing (high). Since we cannot assign a probability higher than one, increasing it is simulated by reducing all other probabilities, in all positions. Knowing that this specific letter is picked at this specific position allows us to choose the two corresponding probabilities in a way that completes the proof. In order for the probabilities to sum to $1$ in every position, we introduce two auxiliary letters ($'\#'$ and $'\$'$) that are never picked ($'\$'$ never appears on the first weighted sequence, $'\#'$ never appears on the second).

The alphabet $\Sigma '$ of $X', Y'$ is the alphabet $\Sigma$ of $X, Y$ extended by three new letters, $\Sigma '=\Sigma \cup \{'\#', '\$', '\%'\}$. Let $m=\frac{a_1}{2}$ and $a=m^k a_1$. Notice that since $k\leq n$, the size of $a$ in bits is only polynomial compared to the input size, not exponential. The new sequences $X'$ and $Y'$ are constructed as follows:
\begin{flalign*}
p^{(X')}_i(\sigma)& =mp^{(X)}_i(\sigma), 1\leq i\leq |X|, \sigma \in \Sigma &
p^{(Y')}_i(\sigma)& =mp^{(Y)}_i(\sigma), 1\leq i\leq |Y|, \sigma \in \Sigma \\
p^{(X')}_i('\#')& =1-\sum_{\sigma \in \Sigma ' \setminus \{'\#'\}}
p^{(X')}_i(\sigma), \forall{i} &
p^{(Y')}_i('\$')& =1-\sum_{\sigma \in \Sigma ' \setminus \{'\$'\}}p^{(Y')}_i(\sigma), \forall{i} \\
p^{(X')}_{|X|+1}('\%'&) =1 &
p^{(Y')}_{|Y|+1}('\%'&) =\frac{a_1}{a_2}
\end{flalign*}
All non-specified probabilities are equal to $0$.

If there exists a solution to $(X,Y,a_1,a_2,k)$, then there exist two increasing subsequences $\pi_1=(i_1,\ldots,i_k), \pi_2=(j_1,\ldots,j_k)$ and a string $s$ such that $P_X(\pi_1, s)\geq a_1, P_Y(\pi_2, s)\geq a_2$. Define $\pi_1'=(i_1,\ldots,i_k,|X|+1), \pi_2'=(j_1,\ldots,j_k,|Y|+1)$ and $s'$ to be equal to $s$ extended with the letter $'\%'$. It holds that:

$P_{X'}(\pi_1', s')=m^kP_X(\pi_1,s)\geq m^ka_1=a, P_{Y'}(\pi_2', s')=m^kP_Y(\pi_2,s)\frac{a_1}{a_2}\geq m^ka_2\frac{a_1}{a_2}=a$

Conversely, suppose there exists a solution to $(X',Y',a,k+1)$. Then, there exist increasing subsequences $\pi_1=(i_1,\ldots,i_{k+1}), \pi_2=(j_1,\ldots,j_{k+1})$ and a string $s$ such that $P_{X'}(\pi_1, s)\geq a, P_{Y'}(\pi_2, s)\geq a$. First of all, notice that, due to $p^{(X')}_i('\$')=p^{(Y')}_i('\#')=0$ for all $i$, $s$ does not contain letters $'\$'$ and $'\#'$. In addition, the letter $'\%'$ only appears at the last position, and it is the only possible option for this position. Finally, the last position shall be used on both subsequences, because otherwise $P_{X'}(\pi_1,s), P_{Y'}(\pi_2,s)\leq m^{k+1}< a$. Thus, the last letter of $s$ is $'\%'$. If we denote by $s'$ the string $s$ without its last letter, it holds that $P_X(\{i_1,\ldots,i_k\},s')\geq a_1, P_Y(\{j_1,\ldots,j_k\},s')\geq a_2$.

The computation of $a$ requires $\mathcal{O}(Mul_w(nB)\log{(nB)})$ time due to Corollary~\ref{Multiplication}, and the $n|\Sigma|$-multiplications of two numbers with at most $B$ bits each cost $\mathcal{O}(n|\Sigma|Mul_w(B))$. All other computations take linear time.
\end{proof}

We note that \cite{DBLP:journals/dam/CyganKRRW16} proved the same result, but their reduction required computations with real numbers (raising to the $\log_{a_2}{a_1}$ power). To the best of our knowledge, there is no way to modify that reduction so that it tolerates the rounding error in the word $RAM$ introduced by working with roots and logarithms.

In what follows, we show that the rounding errors may cause the algorithm by Cygan et al. \cite{DBLP:journals/dam/CyganKRRW16}, which reduces any instance of WLCS to a more restricted class of instances, to err.  This does not rule out the possibility that more clever rounding algorithms (depending on the input size) may indeed be used so that the algorithm does not err; however we are not aware of any such rounding technique, and even if it exists, the algorithm would probably become too complicated compared to ours.

\begin{lemma}
The reduction from $(a_1,a_2)$-WLCS to WLCS with only one threshold given by Cygan et al. in \cite{DBLP:journals/dam/CyganKRRW16} may err, if exact computations with logarithms and roots are not assumed (assuming the rounding technique does not depend on the input, for example it only keeps a constant number of decimal digits).
\end{lemma}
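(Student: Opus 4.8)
The plan is to construct an explicit adversarial instance of $(a_1,a_2)$-WLCS on which Cygan et al.'s reduction, run with $t$-digit truncation for some fixed constant $t$, returns the wrong answer, whereas the exact-arithmetic version would be correct. Recall that their reduction keeps $X$ and its threshold $a_1$ unchanged and raises every probability of $Y$ to the power $c=\log_{a_2}a_1$, so that a product $\prod p\ge a_2$ becomes $\prod p^{c}=(\prod p)^{c}\ge a_2^{c}=a_1=:a$; thus the single threshold is the rational number $a=a_1$, computed without any roots. The \emph{only} lossy step is forming the transformed probabilities $p^{c}$, which are generically irrational and must be rounded. Since $a_1<a_2<1$ we have $c>1$, so each $p^{c}$ stays in $(0,1)$ and the construction remains well defined.

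First I would force a unique decisive string. Using a diagonal construction exactly as in the proof of Theorem~\ref{thm:DiagonalWS} (each relevant letter occurs in a single position, with auxiliary letters restoring normalisation and never being picked), I arrange that the only length-$k$ string $s$ that can meet both thresholds corresponds to one fixed increasing subsequence, and I make $X$ generous so that $P_{X}(s)\ge a_1$ holds with room to spare. The whole question then reduces to whether the $Y$-side of $s$ clears its threshold. I place exactly two relevant $Y$-positions with rational probabilities $p_1,p_2\in(a_2,1)$ chosen so that $p_1p_2=a_2$ exactly; then in exact arithmetic $P_{Y'}(s)=(p_1p_2)^{c}=a_2^{c}=a$, so $s$ sits precisely on the boundary and the instance is a \emph{yes}-instance.

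Next I would pin down the direction of the error. Fix truncation to $t$ decimal digits, so each stored value satisfies $\widehat{p_i^{c}}\le p_i^{c}$, with strict inequality whenever $p_i^{c}$ is not representable in $t$ digits. Since $x\mapsto x^{c}$ is injective and there are only $10^{t}+1$ numbers representable with $t$ digits in $[0,1]$, at most finitely many rational values of $p_1\in(a_2,1)$ make $p_1^{c}$ exactly representable; I simply pick $p_1$ (with $p_2=a_2/p_1$ still a valid probability in $(a_2,1)$) outside this finite bad set. Then $\widehat{p_1^{c}}<p_1^{c}$, so the stored $Y$-product satisfies $\widehat{p_1^{c}}\,\widehat{p_2^{c}}<p_1^{c}p_2^{c}=a$, and the rounded reduction declares that $s$ fails its threshold. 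Because $s$ was the unique viable string, the rounded reduction outputs \emph{no}, contradicting the true answer \emph{yes}; hence it errs.

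The hard part will be the bookkeeping of the first step rather than any deep idea: I must check that truncating all probabilities by tiny amounts does not accidentally make some non-$s$ string viable in the rounded instance, and that the $X$-side slack and the normalisation via auxiliary letters survive rounding. Both are routine given the diagonal construction, where off-diagonal letters carry probability $0$ and stay at $0$ under truncation, so no competing string can appear, and where the $X$-slack is a fixed positive rational untouched by the $Y$-side rounding. I would also remark that the argument adapts to any fixed, input-independent rounding rule: for round-to-nearest one additionally selects $p_1$ so that the discarded tail of $p_1^{c}$ forces a downward rounding, which again excludes only a sparse set of choices and leaves infinitely many admissible $p_1$.
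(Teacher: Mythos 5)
Your main construction is sound for the truncation rule and takes a genuinely different route from the paper. The paper works with one explicit four-position instance ($a_1=\frac{1}{8}$, $a_2=\frac{1}{4}$, $\gamma=\frac{3}{2}$) and does a two-case analysis on the direction in which the fixed rounding rule displaces the single irrational value $\left(\frac{1}{2}\right)^{\gamma}$: if the stored value $r$ satisfies $r<\left(\frac{1}{2}\right)^{\gamma}$, the boundary instance $x=1$ (true answer $aaaa$) is misclassified; if $r>\left(\frac{1}{2}\right)^{\gamma}$, the adversary instead takes $x=\left(\frac{k-1}{k}\right)^{2}$ --- chosen rational so that $x^{\gamma}=\left(\frac{k-1}{k}\right)^{3}$ is also rational --- and lets $k$ grow until $x^{\gamma}r^{2}\geq\frac{1}{8}$ even though the true product is below threshold. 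You instead pin down a unique decisive string via a diagonal gadget, place the exact $Y$-side product precisely at the threshold ($p_1p_2=a_2$, hence $(p_1p_2)^{c}=a$), and use a counting argument (injectivity of $x\mapsto x^{c}$ against the finitely many $t$-digit representables) to force a strictly downward rounding of $p_1^{c}$; since truncation also satisfies $\widehat{p_2^{c}}\leq p_2^{c}$, the stored product drops strictly below $a$ and the yes-instance is answered no. This is cleaner and more flexible in one respect: it works for every pair $a_1<a_2$ and every precision $t$, not just one hand-picked instance, and your observation that truncation preserves the zero entries of the diagonal gadget correctly disposes of competing strings.

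However, your closing claim that the argument ``adapts to any fixed, input-independent rounding rule'' has a genuine gap. Even for round-to-nearest, forcing the tail of $p_1^{c}$ to round down is not enough: the stored product is $\widehat{p_1^{c}}\,\widehat{p_2^{c}}$, and if $p_2^{c}=a/p_1^{c}$ rounds \emph{up}, the product may remain $\geq a$ and the algorithm answers correctly; you must simultaneously control the tails of $p_1^{c}$ and $a/p_1^{c}$, and your sparse-bad-set remark speaks only of $p_1$. More fundamentally, for a rule that always rounds up (ceiling to $t$ digits) your boundary yes-instance can \emph{never} be misclassified, since upward rounding only keeps the stored product $\geq a$; covering that direction requires the dual construction --- a no-instance whose true product lies just below $a$ but whose rounded product is pushed to $\geq a$ --- which is exactly what the paper's second case supplies via the family $x=\left(\frac{k-1}{k}\right)^{2}$ with $x^{\gamma}$ rational and $x^{\gamma}r^{2}\to r^{2}>a$. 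That two-sided case analysis is what the paper's proof buys and your proposal lacks. As written, your argument does establish the lemma for the paradigmatic rounding named in its statement (constant-digit truncation), so the lemma itself is proved; but the generalization sentence should either be dropped or expanded into the dual construction with simultaneous tail control.
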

\begin{proof}
We prove the above with an example that demonstrates that the rounding error, introduced by not assuming exact computations with logarithms and roots, may cause the reduction to err.

Let $a_1=\frac{1}{8}, a_2=\frac{1}{4}$ and the two weighted sequences $X$ and $Y$ on alphabet $\Sigma=\{a,b\}$ be: 

\begin{center}
\begin{tabular}{|l|l|l|l|l|}
\hline
$X$ & 1 & 2 & 3 & 4 \\ \hline
a & 1 & 1 & 1 & $\frac{1}{8}$ \\ \hline
b & 0 & 0 & 0 & $\frac{7}{8}$ \\ \hline
\end{tabular}
\quad
\begin{tabular}{|l|l|l|l|l|}
\hline
$Y$ & 1 & 2 & 3 & 4 \\ \hline
a & $x$ & $\frac{1}{2}$ & $\frac{1}{2}$ & 1 \\ \hline
b & $1-x$ & $\frac{1}{2}$ & $\frac{1}{2}$ & 0 \\ \hline
\end{tabular}
\end{center}

\noindent where $0\leq x\leq 1$ is a constant to be specified later.
For $x=1$, the weighted $LCS$ is $aaaa$ and for $x<1$ the weighted $LCS$ is $aaa$. The transformation described in~\cite{DBLP:journals/dam/CyganKRRW16} would give $a=\frac{1}{8}, \gamma=\frac{3}{2}$ and the new sequences would be:

\begin{center}
\begin{tabular}{|l|l|l|l|l|}
\hline
$X'$ & 1 & 2 & 3 & 4 \\ \hline
a & 1 & 1 & 1 & $\frac{1}{8}$ \\ \hline
b & 0 & 0 & 0 & $\frac{7}{8}$ \\ \hline
\# & 0 & 0 & 0 & 0 \\ \hline
\end{tabular}
\quad
\begin{tabular}{|l|l|l|l|l|}
\hline
$Y'$ & 1 & 2 & 3 & 4 \\ \hline
a & $x^{\gamma}$ & $\frac{1}{2}^{\gamma}$ & $\frac{1}{2}^{\gamma}$ & 1 \\ \hline
b & $(1-x)^{\gamma}$ & $\frac{1}{2}^{\gamma}$ & $\frac{1}{2}^{\gamma}$ & 0 \\ \hline
\# & $1-x^{\gamma}-(1-x)^{\gamma}$ & $1-2*\frac{1}{2}^{\gamma}$ & $1-2*\frac{1}{2}^{\gamma}$ & 0 \\ \hline
\end{tabular}
\end{center}
Since $\frac{1}{2}^{\gamma}$ is an irrational number, it is rounded to some number $r=\left\lfloor \frac{1}{2}^{\gamma}\right \rceil$. Suppose $r<\frac{1}{2}^{\gamma}$. In this case, when $x=1$, while the weighted $LCS$ is $aaaa$ the algorithm returns $aaa$ due to the rounding errors. On the other hand, if $r>\frac{1}{2}^{\gamma}$, we can always find an appropriate $x<1$ such that the weighted $LCS$ should have been $aaa$ but the algorithm returns $aaaa$ due to the rounding errors. To show this, let $x=\left(\frac{k-1}{k}\right)^{2}$ for some integer $k$. Then $x^{\gamma}=\left(\frac{k-1}{k}\right)^{3}$. It holds that $\left(\frac{k-1}{k}\right)^{3}r^2$ is an increasing function of $k$ which converges to $r^2>\frac{1}{8}$. Thus, we can find a big enough $k$ such that $x^{\gamma} r^2\geq \frac{1}{8}$ and err on this particular example, as long as the rounding technique does not depend on the input (for example it only keeps a constant number of decimal digits).
\end{proof}
Once again, the above is not a proof that the algorithm given by Cygan et al. can never be correct, despite of the rounding algorithm used. It just shows that it is necessary to explicitly specify such a rounding algorithm in order to construct a correct algorithm.

\end{document}